\tikzset{
ell/.style={draw,ellipse,minimum height=2em,minimum width=8em,align=center},
line/.style={-,line width=0.5pt},
dashline/.style={-,dashed,line width=0.5pt},
dot/.style = {draw,fill,circle,inner sep=0pt,outer sep=0pt,minimum size=2pt},
}
\newcommand{\ceiling}[1]{\lceil #1\rceil}
\newtheorem{proposition}{Proposition}
\newenvironment{proof}{\emph{Proof:}}{$\Box$\newline}
\title{An improved, easily computable combinatorial lower bound
  for weighted graph bipartitioning}
\author{Jesper Larsson Tr\"aff, Martin Wimmer\\
Vienna University of Technology (TU Wien)\\
Faculty of Informatics, Institute of Information Systems\\ 
Research Group Parallel Computing\\
Favoritenstrasse 16/184-5, 1040 Vienna, Austria\\
email: \texttt{\{traff,wimmer\}@par.tuwien.ac.at}
}
\begin{document}
\maketitle

\begin{abstract}
There has recently been much progress on exact algorithms for the
(un)weighted graph (bi)partitioning problem using branch-and-bound and
related methods.  In this note we present and improve an easily
computable, purely combinatorial lower bound for the weighted
bipartitioning problem. The bound is computable in $O(n\log n+m)$ time
steps for weighted graphs with $n$ vertices and $m$ edges. In the
branch-and-bound setting, the bound for each new subproblem can be
updated in $O(n+(m/n)\log n)$ time steps amortized over a series of $n$
branching steps; a rarely triggered tightening of the bound requires
search on the graph of unassigned vertices and can take from $O(n+m)$
to $O(nm+n^2\log n)$ steps depending on implementation and possible
bound quality. Representing a subproblem uses $O(n)$ space. Although
the bound is weak, we believe that it can be advantageous in a
parallel setting to be able to generate many subproblems fast,
possibly out-weighting the advantages of tighter, but much more
expensive (algebraic, spectral, flow) lower bounds.

We use a recent priority task-scheduling framework for giving a
parallel implementation, and show the relative improvements in bound
quality and solution speed by the different contributions of the lower
bound. A detailed comparison with standardized input graphs to other
lower bounds and frameworks is pending. Detailed investigations of
branching and subproblem selection rules are likewise not the focus
here, but various options are discussed.
\end{abstract}

\section{Introduction}

There has recently been much progress on the exact solution of graph
partitioning problems, see for instance the
survey~\cite{MeyerhenkeSanders13}, as well as on heuristics for graphs
with special structure like for instance road networks.  In
particular, Delling et
al.~\cite{DellingGoldbergRazenshteynWerneck11,DellingGoldbergRazenshteynWerneck12,DellingWerneck12}
investigate new combinatorial lower bounds (based on approximations of
maximum flow-minimum cut bounds) for the unweighted problem, and
perform computational studies within a parallel branch-and-bound
framework~\cite{BudiuDellingWerneck11}. Armbruster et al. study linear
and semidefinite programming
approaches~\cite{ArmbrusterFugenschuhHelmbergMartin12}, and present a
sequential computational study. Improved flow-based bounds were given
in~\cite{Sensen01}, also with a computational study.

This note investigates another, simple, combinatorial lower-bound
approach which applies to both the weighted and unweighted graph
(bi)partitioning problems. The basic lower bound was originally
proposed in the early 90ties~\cite{Traff91:or,Traff94:ppl} with some
later improvements~\cite{Traff96:slb}. We present proofs and further
improvements, and implementations within the parallel task-scheduling
framework Pheet\footnote{The framework with the implementations
  described in this note can be downloaded from \url{www.pheet.org}}
which has been extensively described in~\cite{Wimmer14:diss}. The
motivation for this bound is the belief that weaker, but more easily
computable bounds may be preferable for parallel branch-and-bound over
stronger but hard-to-compute bounds in order to keep a large number of
processing units (threads, processes, cores, processors, \ldots) busy
throughout the solution of the given partitioning problems. This was
observed in~\cite{Traff94:ppl};
and~\cite{BudiuDellingWerneck11,DellingGoldbergRazenshteynWerneck12}
  give similar motivations for their bounds.

\section{The graph partitioning problem}

Given a weighted, undirected graph $G=(V,E)$ with vertices (or nodes,
used synonymously) $V$ and edges $E$ with arbitrary (real or integer)
edge weights $w(u,v), (u,v)\in E$, the \emph{graph bipartitioning
  problem} is to find a partition (\emph{cut}) of $V$ into two subsets
$V_0$ and $V_1$ of given sizes $|V_0|=s_0$ and $|V_1|=s_1$ with
$s_0+s_1=n$ and $s_0>0, s_1>0$ having minimum cut weight $w(V_0,V_1)$
over all such partitions. The weight of a cut is defined by extension
of the weight function as
\begin{eqnarray*}
w(V_0,V_1) & = & \sum_{\{(u,v)\in E | u\in V_0,v\in V_1\}}w(u,v)
\end{eqnarray*}
for any two disjoint subsets $V_0\subset V$ and $V_1\subset V$. The
graph partitioning problem is NP-hard, see
e.g.~\cite{GareyJohnson79,GareyJohnsonStockmeyer76}. The natural (and
relevant) generalization of the problem to partitioning $V$ into $k,
k>2$ subsets $V_i$ with predefined sizes $|V_i|=s_i$ (or with
predefined total vertex costs) is not discussed here, but many of the
observations carry over to the $k$-partitioning problem also.

\section{Lower and upper bounds}

We solve the graph partitioning problem using \emph{branch-and-bound},
a standard, search based method~\cite{PapadimitriouSteiglitz82} which
is presumably well-suited to parallel implementation, see,
e.g.~\cite{CrainicLeCunRoucairol06,GendronCrainic94,Talbi06}. The
essential components of a branch-and-bound algorithm are the notions
of \emph{subproblem}, \emph{completion}, \emph{lower bound}, and
\emph{branching rule}. The lower bound provides for any subproblem a
bound on the cut value of any completion of the subproblem. As soon as
the lower bound for a subproblem is larger than or equal to some
current, best feasible solution (or \emph{upper bound}) the subproblem
can be discarded from further consideration since it can never lead to
a better solution.

Any partition of the vertex set $V$ into a pair of subsets $(V_0,V_1)$
that fulfills $|V_i|=s_i, i=0,1$ is a \emph{feasible solution} to the
graph partitioning problem.  A \emph{subproblem} is a pair $(U_0,U_1)$
of disjoint subsets of $V$ with with $|U_i|\leq s_i,i=0,1$
representing a partial assignment of vertices to either of the two
subsets. A \emph{completion} of a subproblem $(U_0,U_1)$ is a feasible
solution $(V_0,V_1)$ with $U_i\subseteq V_i,i=0,1$. Vertices of $G$ in
either of $U_i$ are said to be \emph{fixed}, otherwise \emph{free}.
The set of free nodes is thus $F=V\setminus (U_0 \cup U_1)$. The
\emph{branching rule} selects a free node $v\in F$ and creates two new
subproblems by extending either of the sets $U_i$ with $v$, such that
$(U_0\cup\{v\},U_1)$ and $(U_0,U_1\cup\{v\})$ will be the two new
subproblems to be considered. The branch-and-bound process starts from
an empty subproblem $(\emptyset,\emptyset)$, respectively, if $n$ is
even, from a subproblem $(\{u\},\emptyset)$ for some node $u$ in order to
avoid generating symmetric solutions.

Let $n=|V|$ and $m=|E|$. We assume that $G$ has no self-loops $(u,u)$;
such edges never contribute to a cut anyway. We also assume that edges
$(u,v)\in E$ have non-negative costs. For the implementation, we let
$V=\{0,\ldots,n-1\}$. We represent a subproblem $(U_0,U_1)$ by two
bitmaps $B_i$ of $n$ bits; bit $u$ of $B_i$ is set iff $u\in U_i,
i=0,1$. Furthermore, we also maintain a bitmap for the free vertices,
and in addition an array of free vertices with $f=|F|=n-|U_0|-|U_1|$
being the number of free vertices. The weighted input graph $G$ is
represented by an array of adjacency arrays. Note that each edge
$(u,v)\in E$ is present in the adjacency arrays of both node $u$ and
of node $v$. We also need for each edge $(u,v)$ in the $i$th position
of the adjacency array of $u$ the position $j$ of $u$ in the adjacency
array of $v$. Finally, we store the adjacency arrays in sorted,
non-decreasing weight order.

The lower bounds are based on the following simple observation.  Let
$(V_0,V_1)$ be a completion of a subproblem $(U_0,U_1)$. It holds that
\begin{eqnarray*}
w(V_0,V_1) & = & w(U_0,U_1) + 
w(V_0\setminus U_0,U_1) + w(U_0,V_1\setminus U_1) +
w(V_0\setminus U_0,V_1\setminus U_1)
\end{eqnarray*}
A lower bound for a subproblem $(U_0,U_1)$ is therefore given by the
cut between already assigned vertices in $U_0$ and $U_1$, $w(U_0,U_1)$,
plus a lower bound on the term $w(V_0\setminus
U_0,U_1)+w(U_0,V_1\setminus U_1)$, and finally a lower bound on the term
$w(V_0\setminus U_0,V_1\setminus U_1)$. The latter two contributions
can be treated independently.

\subsection{The lower bound: basic bound and rebalancing}
\label{sec:basicrebalance}

Let $v\in F$ be a free node in the subproblem $(U_0,U_1)$. Any
completion $(V_0,V_1)$ will have a contribution to the cut value from
$v$ of at least $\min(w(v,U_0),w(v,U_1))$, no matter whether $v$ is
eventually in $V_0$ or $V_1$. Namely, if $v$ is in $V_0$ all
edges from $v$ to nodes in $U_1$ will contribute to the cut, and
similarly if $v$ is in $V_1$. Thus, a trivial lower bound for
the term  $w(V_0\setminus U_0,U_1) + w(U_0,V_1\setminus U_1)$ is
\begin{eqnarray*}
B(U_0,U_1) & = & \sum_{v\in F}\min(w(v,U_0),w(v,U_1))
\end{eqnarray*}

Computing this bound from scratch takes $O(n+m)$ steps. If we maintain
for each (free) vertex $v$ the two values $D_i[v] = w(v,U_i)$ for
the cost of assigning $v$ to subset $V_{i-1}$, the lower bound can be
computed as $\sum_{v\in F}\min(D_0[v],D_1[v])$ in $O(f)$ time steps
where $f=|F|$ is the number of free nodes. When branching on node $v$
and $v$ is put into $V_i$, all values $D_{i}[u]$ where $(v,u)\in
E$ need to be increased by $w(v,u)$. This can be done in $O(\deg(v))$
steps.

The bound $B(U_0,U_1)$ does not take the cardinality constraints on
completions of $(U_0,U_1)$ into account. If, for instance,
$w(v,U_1)<w(v,U_0)$ for a large number of nodes, then $B(U_0,U_1)$ may
count too many vertices as having been assigned to subset $V_0$, and a
stronger bound could be obtained by counting some of these vertices as
assigned to $V_1$. Define $\delta(v) =
w(v,U_1)-w(v,U_0)=D_1[v]-D_0[v]$ as the \emph{potential free weight
  increase} of node $v$. If $\delta(v)>0$ vertex $v$ would tend to be
assigned to subset $V_1$, and there is a penalty of $\delta(v)$ of
assigning $v$ to $V_0$ instead; if $\delta(v)<0$ the lower bound
$B(U_0,U_1)$ would count $v$ as assigned to $V_0$, and there would be
a penalty of $-\delta(v)$ of instead assigning $v$ to $V_1$. Penalties
are the amounts of which the lower bound might be increased when the
cardinality of the sets $V_0\setminus U_0$ and $V_1\setminus U_1$ are
taken into account. 

Let $\delta_i, 0\leq i<f$ be the potential free weight increases in
sorted order, $\delta_i\leq \delta_{i+1}$ for $0\leq i<f-1$. Then the
lower bound can be strengthened by a \emph{rebalancing contribution}
\begin{eqnarray*}
R(U_0,U_1) & = & 
\sum_{i=0}^{f_0-1}\max(0,\delta_i)+\sum_{i=f_0}^{f_1-1}\max(0,-\delta_i)
\end{eqnarray*}
where $f_i=s_i-|U_i|, i=0,1$. This basic rebalancing bound was first
presented in~\cite{Traff91:or,Traff94:ppl}. Computing the rebalancing
contribution seems to require sorting of the $\delta(v), v\in F$
values and can be done easily in $O(f\log f)$ steps. Our
implementation computes the rebalancing bound in this
fashion. Maintaining the $\delta(v)$ values in a priority queue does
not improve complexity, since up to $f$ values have to be considered
in order, and each extract min operation takes logarithmic
time. However, if the $\delta(v)$ values are maintained in sorted
order, recomputation and sorting is necessary only for $\deg(v)$ nodes
when branching on vertex $v$. The full array of $f$ values can be
reestablished by merging. The complexity of the rebalancing steps is
hereby reduced to $O(f+\deg(v)\log\deg(v)))$ which is $O(n^2+m\log n)$
for the whole a series of at most $n$ branching steps.

\begin{proposition}
\label{prop:rebalancing}
For any given subproblem $(U_0,U_1)$ it holds that
\begin{eqnarray*}
B(U_0,U_1)+R(U_0,U_1) & \leq & 
w(V_0\setminus U_0,U_1)+w(U_0,V_1\setminus U_1)
\end{eqnarray*}
for any completion $(V_0,V_1)$. The bound is \emph{tight}: there is a
completion $(V_0,V_1)$ such that $B(U_0,U_1)+R(U_0,U_1) =
w(V_0\setminus U_0,V_1)+w(U_0,V_1\setminus U_1)$.
\end{proposition}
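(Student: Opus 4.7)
The plan is to rewrite the right-hand side as $B(U_0,U_1)$ plus a per-vertex \emph{correction} that depends only on the side to which each free vertex is eventually assigned, and then to show that this correction is minimized, over feasible completions, exactly by $R(U_0,U_1)$, using an exchange argument on the sorted $\delta$-values.

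First I would split the right-hand side vertex by vertex. For each free $v\in F$, its contribution to $w(V_0\setminus U_0,U_1)+w(U_0,V_1\setminus U_1)$ is $D_1[v]=w(v,U_1)$ if $v$ ends up in $V_0$, and $D_0[v]=w(v,U_0)$ if $v$ ends up in $V_1$, so
\[
w(V_0\setminus U_0,U_1)+w(U_0,V_1\setminus U_1) \;=\; \sum_{v\in F\cap V_0} D_1[v] \;+\; \sum_{v\in F\cap V_1} D_0[v].
\]
Subtracting $B(U_0,U_1)=\sum_{v\in F}\min(D_0[v],D_1[v])$ term by term, and checking the two cases $\delta(v)\geq 0$ and $\delta(v)<0$, one obtains $D_1[v]-\min(D_0[v],D_1[v])=\max(0,\delta(v))$ and $D_0[v]-\min(D_0[v],D_1[v])=\max(0,-\delta(v))$, hence
\[
w(V_0\setminus U_0,U_1)+w(U_0,V_1\setminus U_1) \;=\; B(U_0,U_1) \;+\; \sum_{v\in F\cap V_0}\max(0,\delta(v)) \;+\; \sum_{v\in F\cap V_1}\max(0,-\delta(v)).
\]

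Next I would minimize this correction over all partitions of $F$ into a part of size $f_0$ (going to $V_0$) and a part of size $f_1$ (going to $V_1$). Using the elementary identity $\max(0,x)-\max(0,-x)=x$, swapping a $v$ currently in the $V_0$-part with a $u$ currently in the $V_1$-part changes the correction by exactly $\delta(u)-\delta(v)$. Therefore whenever $\delta(u)<\delta(v)$ for such a pair, the swap strictly reduces the correction; iterating shows that any minimizer assigns to $V_0$ precisely the $f_0$ free vertices of smallest $\delta$-value, i.e.\ the indices $0,\dots,f_0-1$ in the sorted order $\delta_0\leq\delta_1\leq\cdots\leq\delta_{f-1}$. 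Plugging this assignment into the correction gives the two sums defining $R(U_0,U_1)$, which establishes the inequality. Tightness then follows for free, since the completion that actually realizes this assignment (breaking ties arbitrarily) is feasible and attains the bound.

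The only genuinely non-routine step is the exchange argument in the third paragraph, but it rests on nothing more than the identity $\max(0,x)-\max(0,-x)=x$ together with the cardinality constraints $|F\cap V_i|=f_i$; everything else is bookkeeping.
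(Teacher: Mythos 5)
Your proof is correct, and it is a genuinely cleaner reorganization of the argument than the one in the paper, though both rest on the same exchange idea. The paper fixes a completion that \emph{minimizes} $w(V_0\setminus U_0,U_1)+w(U_0,V_1\setminus U_1)$, uses the swap argument only to deduce $\delta(u)\leq\delta(v)$ across the two sides of that minimizer, and then establishes the value of the minimum by induction on $\min(f_0,f_1)$, peeling off an extremal pair $u,v$ at each step. You instead prove the exact identity
\[
w(V_0\setminus U_0,U_1)+w(U_0,V_1\setminus U_1)=B(U_0,U_1)+\sum_{v\in F\cap V_0}\max(0,\delta(v))+\sum_{v\in F\cap V_1}\max(0,-\delta(v))
\]
for \emph{every} completion, and then minimize the correction term directly over all size-$(f_0,f_1)$ splits of $F$ via the swap computation $\max(0,x)-\max(0,-x)=x$. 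This buys you two things: the inequality and the tightness claim drop out simultaneously from a single minimization (the minimizing split is itself a feasible completion), and the induction disappears entirely. One bookkeeping remark: your computation yields $\sum_{i=0}^{f_0-1}\max(0,\delta_i)+\sum_{i=f_0}^{f-1}\max(0,-\delta_i)$ with $f=f_0+f_1$; the paper's displayed $R(U_0,U_1)$ has $f_1-1$ as the upper limit of the second sum, which is evidently a typo for $f-1$, so your result matches the intended bound. Your handling of ties (equal $\delta$-values swap at zero cost, so the minimum value is independent of tie-breaking) is also adequate.
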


\begin{proof}
As argued above, $B(U_0,U_1)$ is a lower bound on $w(V_0\setminus
U_0,V_1)+w(U_0,V_1\setminus U_1)$ in any completion $(V_0,V_1)$: Each
assigned vertex will contribute a weight of either $w(v,U_1)$ or
$w(U_0,v)$. The crucial part is the rebalancing step. 

Let $(V_0,V_1)$ be a completion that minimizes $w(V_0\setminus
U_0,U_0)+w(U_0,V_1\setminus U_1)$. Pick any two nodes $u\in
V_0\setminus U_0$ and $v\in V_1\setminus U_1$. It must hold that
$w(u,U_1)+w(v,U_0) \leq w(u,U_0)+w(v,U_1)$ since otherwise the weight
$w(V_0\setminus U_0,U_1)+w(U_0,V_1\setminus U_1)$ could be reduced by
swapping $u$ and $v$. This implies that $\delta(u)\leq\delta(v)$.  Let
$f_i=|V_i\setminus U_i|, i=0,1$.  We now prove by induction on
$\min(f_0,f_1)$ that
\begin{eqnarray*}
B(U_0,U_1)+R(U_0,U_1) & = & 
w(V_0\setminus U_0,U_1)+w(U_0,V_1\setminus U_1)
\end{eqnarray*}
for such a completion. Assume first that either $f_0=0$ or $f_1=0$. If
$f_0=0$, all free vertices are assigned to $V_1$, and for each $v$ it
holds that $w(v,U_0) = \min(w(v,U_0),w(v,U_1))+\max(0,-\delta(v))$;
namely, if $w(v,U_0)>w(v,U_1)$ then $w(v,U_0) =
w(v,U_1)-(w(v,U_1)-w(v,U_0))=w(v,U_1)-\delta(v)$.  If instead $f_1=0$,
then it holds that
$w(v,U_1)=\min(w(v,U_0),w(v,U_1))+\max(0,\delta(v))$. Therefore, in
either case
\begin{eqnarray*}
w(V_0\setminus U_0,U_1)+w(U_0,V_1\setminus U_1) & = & 
\sum_{v\in F}\min(w(v,U_0),w(v,U_1)) + \\
& &
\sum_{i=0}^{f_0-1}\max(0,\delta_i)+\sum_{i=f_0}^{f_1-1}\max(0,-\delta_i)
\\
& = & B(U_0,U_1)+R(U_0,U_1)
\end{eqnarray*}

Now assume that $\min(f_0,f_1)>0$. Choose a vertex $u\in
V_1\setminus U_0$ that maximizes $\delta(u)$ over all such $u$, and a
vertex $v\in V_1\setminus U_1$ that minimizes $\delta(v)$ over all
such $v$. Recall that $\delta(u)\leq\delta(v)$.  The contribution of
$u$ to $w(V_0\setminus U_0,U_1)+w(U_0,V_1\setminus U_1)$ is
$w(u,U_1)$, which, if $\delta(u)\geq 0$ can be written as
$\min(w(u,U_0),w(u,U_1))+\delta(u) = w(u,U_1)$ since $w(u,U_0)\leq
w(u,U_1)$. The contribution of $u\in V_0\setminus U_0$ is therefore
$\min(w(u,U_0),w(u,U_1))+\max(0,\delta(u))$. Likewise, the
contribution of $v$ to $w(V_0\setminus U_0,U_1)+w(U_0,V_1\setminus
U_1)$ is $w(v,U_0)$ which, by a similar case analysis, can be written
as $\min(w(v,U_0),w(v,U_1))+\max(0,-\delta(v))$. We can now remove $u$
and $v$ from the set of free edges. The resulting completion
$(V_0\setminus\{u\},V_1\setminus\{v\})$ minimizes
$w(V_0\setminus\{u\}\setminus
U_0,U_1)+w(U_0,V_1\setminus\{v\}\setminus U_1)$, and $\delta(u')\leq
\delta(v')$ for all $u'\in V_0\setminus\{u\}\setminus U_0$ and $v'\in
V_1\setminus\{v\}\setminus U_1$. By the induction hypothesis

\begin{eqnarray*}
w(V_0\setminus\{u\}\setminus U_0,U_1)+w(U_0,V_1\setminus\{v\}\setminus
U_1) & = & \sum_{v\in F\setminus\{u,v\}}\min(w(v,U_0),w(v,U_1)) + \\
& & \sum_{i=0}^{f'_0-1}\max(0,\delta_i)+\sum_{i=f'_0}^{f'_1-1}\max(0,-\delta_i)
\end{eqnarray*}
where $f'_i=f_i-1$ is the size of the subsets with $u$ and $v$ removed.
Adding in the contribution from $u$ and $v$ establishes the lower
bound claim.
\end{proof}

To represent a subproblem, our implementation uses $O(n)$ for the
bitmaps of fixed and free nodes, the array of free nodes, and the
$D_i$ values for the free vertices.

\subsection{The lower bound: high-degree unassigned vertices}
\label{sec:highdegree}

The lower bound counts edges between fixed and free vertices in the
subproblem $(U_0,U_1)$ but is oblivious to contributions from edges
between free nodes. However, some of these edges inevitably contribute
to a lower bound on completions of $(U_0,U_1)$. Let wlog $U_0$ be the
subset with the largest number of nodes still to be assigned, that is
assume that $f_0\geq f_1$. Consider the graph $G'=(F,E')$ induced by
the set of free nodes $F$, and let $v\in F$. If the degree of $v$ in
$G'$ is larger than $f_0-1$ then there will be at least
$\deg'(v)-f_0+1$ edges out of $v$ in any cut of $F$ into subsets of
size $f_0$ and $f_1$. Here, $\deg'(v)$ denotes the degree of $v$ in
$G'$.  The smallest weight such edges are a lower bound for the
contribution of $v$. Let $T_i(v) =
\sum_{j=0}^{\max(0,\deg'(v)-f_i+1)}w_j(v)$, where $w_j$ is the $j$th
smallest weight of an edge in $G'$ adjacent to $v$. Summing these
contributions over all free nodes and dividing by two since both
vertices of a cut edge may have a lower bound contribution gives
\begin{eqnarray*}
T(U_0,U_1) & = & \sum_{v\in F}T_0(v)/2
\end{eqnarray*}
For integer edge weights, $\ceiling{\sum_{v\in F}T_0(v)/2}$ is still a
lower bound, as will follow from the argument below.  These
observations were first made in~\cite{Traff96:slb}. Also here
rebalancing can be applied. If there are more than $f_0$ free nodes of
high degree, the lower bound counts too many nodes as becoming
assigned to $V_0$. Let $\delta'(v) = T_1(v)-T_0(v)$ be the
\emph{penalty} of assigning $v$ to the larger subset of size $f_0$; if
$\delta'(v)>0$ there is a gain of assigning $v$ instead to the subset
of size $f_1$ (note that for all $v$, $\delta'(v)\geq 0$). Again, let
$\delta'_i$ be the penalties in sorted order. Then the rebalancing
contribution is
\begin{eqnarray*}
R'(U_0,U_1) & = & \sum_{i=f_0}^{f_1-1}\delta'_{i-f_0}/2
\end{eqnarray*}
Note that rebalancing gives a contribution only if the number of
high-degree vertices is larger than $f_0$.

\begin{proposition}
\label{prop:highdegree}
For any given subproblem $(U_0,U_1)$ it holds that
\begin{eqnarray*}
T(U_0,U_1)+R'(U_0,U_1) & \leq & w(V_0\setminus U_0,V_1\setminus U_1)
\end{eqnarray*}
for any completion $(V_0,V_1)$.
\end{proposition}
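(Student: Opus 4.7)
The plan is to fix an arbitrary completion $(V_0,V_1)$ and bound $w(V_0\setminus U_0,V_1\setminus U_1)$ by a sum of per-vertex contributions in the subgraph $G'=(F,E')$ induced by the free vertices, then peel off the plain bound $T(U_0,U_1)$ and the rebalancing correction $R'(U_0,U_1)$ as two successive refinements. For a single free vertex $v\in F$, suppose the completion places $v$ on the side of size $f_{i(v)}$, that is $v\in V_{i(v)}\setminus U_{i(v)}$ for some $i(v)\in\{0,1\}$. At most $f_{i(v)}-1$ of $v$'s $\deg'(v)$ neighbors in $G'$ can share its side, so at least $\max(0,\deg'(v)-f_{i(v)}+1)$ of them lie opposite $v$ and the corresponding edges contribute to $w(V_0\setminus U_0,V_1\setminus U_1)$. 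Since all edge weights are non-negative, the total weight of any $k$ edges incident to $v$ is at least the sum of the $k$ smallest-weight edges at $v$; hence the contribution of $v$ is at least $T_{i(v)}(v)$. Summing over $v\in F$ counts every cut edge exactly twice, yielding $w(V_0\setminus U_0,V_1\setminus U_1)\geq \frac{1}{2}\sum_{v\in F}T_{i(v)}(v)$.

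Next I would extract the rebalancing term by writing $T_{i(v)}(v)$ as $T_0(v)$ plus $\delta'(v)$ when $v$ lies on the small side and as $T_0(v)$ otherwise, using that $\delta'(v)=T_1(v)-T_0(v)\geq 0$. The sum then splits as $\sum_{v\in F}T_0(v)+\sum_{v\in V_1\setminus U_1}\delta'(v)$. After halving, the first summand is exactly $T(U_0,U_1)$. For the second, every completion places precisely $f_1$ free vertices on the small side, and since all $\delta'(v)\geq 0$ the minimum of this sum over completions is attained by choosing the $f_1$ vertices with smallest $\delta'$-values, i.e.\ $\sum_{i=0}^{f_1-1}\delta'_i$ in non-decreasing order. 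This matches $2R'(U_0,U_1)$, so dividing by two yields the claimed inequality. Integrality of the cut weight then permits rounding the lower bound up when the weights are integer.

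The main obstacle I expect is the bookkeeping in the per-vertex packing step: one must verify that the ``sum of the $k$ smallest incident edges'' remains a lower bound regardless of which specific $k$ neighbors end up opposite $v$ (this uses non-negativity of the weights), and that the final halving is correct because each free--free cut edge appears in the incident lists of both of its endpoints. Unlike Proposition~\ref{prop:rebalancing}, no exchange/swap argument is needed here since only a lower bound and not tightness is asserted, which makes the overall argument considerably more direct than the previous one.
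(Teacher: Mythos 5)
Your proof is correct, but it takes a genuinely different route from the paper's. The paper proves the $T$ part by induction on the number of cut edges incident to high-degree vertices: it picks a cut edge $(u,v)$, charges it against the half-weights $w(u,v')/2+w(v,u')/2$ of the lower-bound edges selected at $u$ and $v$, removes these edges from consideration, and recurses; the rebalancing part is then only asserted to follow ``by induction, similarly to Proposition~\ref{prop:rebalancing}'' without being spelled out. You instead bound, for each free vertex $v$, the total weight of the cut edges incident to $v$ from below by $T_{i(v)}(v)$ (at least $\deg'(v)-f_{i(v)}+1$ incident edges must cross, and any such set of incident edges weighs at least the sum of the smallest ones, by non-negativity), and then invoke the exact double-counting identity: summing incident cut weight over all free vertices gives twice $w(V_0\setminus U_0,V_1\setminus U_1)$. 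This buys two things. First, it avoids the bookkeeping of the edge-removal induction (ensuring each lower-bound edge is charged only once), and it makes the role of the factor $1/2$ transparent. Second, it yields the rebalancing term directly rather than by analogy: writing $T_{i(v)}(v)=T_0(v)+\delta'(v)$ when $v$ lands on the small side and minimizing over the $f_1$ vertices forced there gives the correction $\tfrac12\sum_{i=0}^{f_1-1}\delta'_i$. Be aware that the paper's displayed range $\sum_{i=f_0}^{f_1-1}$ for $R'$ is empty as literally written (since $f_0\geq f_1$) and is evidently a typo for $\sum_{i=f_0}^{f-1}\delta'_{i-f_0}/2=\sum_{j=0}^{f_1-1}\delta'_j/2$; your argument establishes the bound with this intended (and stronger) rebalancing term, so the claimed inequality follows a fortiori. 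A further small advantage of your formulation is that no case distinction between high- and low-degree vertices is needed, since $T_{i(v)}(v)=0$ automatically whenever $\deg'(v)\leq f_{i(v)}-1$.
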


\begin{proof}
  We prove that $T(U_0,U_1)$ is a lower bound for the partitioning
  problem on the graph $G'$ induced by $F$. Let $(u,v)$ be an edge in
  some cut $(W_0,W_1)$ of $F$ fulfilling the constraints $|W_i|=f_i,
  i=0,1$. The proof is by induction on the number of cut edges $(u,v)$
  where either $u$ or $v$ is adjacent to a high-degree vertex with
  degree larger than $f_0-1$. Pick one such cut edge. Let $u$ be a
  high degree vertex, and assume that $(u,v)$ is the $i$th smallest
  edge adjacent to $u$ in the cut. Note that there must be at least
  $\deg'(u)-f_0+1$ edges adjacent to $u$ in any cut since $u$ is a
  high-degree vertex and $G'$ has no self-loops.  If $i<\deg(u)-f_0$,
  the lower bound has a contribution from the $i$th smallest edge
  $(u,v')$ of weight $w(u,v')$ with $w(u,v')\leq w(u,v)$ (note that
  $v'$ may be $v$). Similarly if $v$ is a high-degree vertex. The
  contribution to the lower bound for cut edge $(u,v)$ is
  $w(u,v')/2+w(v,u')/2$, which is at most $w(u,v)$, since
  $w(u,v')+w(u',v)\leq 2w(u,v)$. If $v$ is not a high-degree vertex,
  there is no contribution from $v$, but it still holds that
  $w(u,v')/2\leq w(u,v)$. The cut edge $(u,v)$ can therefore be
  removed from the cut, and the edges $(u,v')$ and $(v,u')$ (if
  present) from $G'$, and the claim now follows by induction. Proof
  that rebalancing still yields a lower bound can be done by
  induction, similarly to the proof of
  Proposition~\ref{prop:rebalancing}.
\end{proof}

\begin{figure}
\centering
\begin{tikzpicture}[scale=0.75]
\node[dot](a) {};
\node[dot,right=2cm of a] (b){};
\node[dot,below=2cm of b] (c){};
\node[dot,left= 2cm of c] (d){};
\node[dot] at ($(d)!0.25!(a)$)(f){};
\node[dot] at ($(d)!0.5!(a)$)(g){};
\node[dot] at ($(d)!0.75!(a)$)(h){};
\node[dot] at ($(b)!0.5!(c)$)(k){};
\node[dot] at ($(b)!0.75!(c)$)(l){};
\node[ell,rotate=90] at ($(b)!0.5!(c)$) {};
\node[ell,rotate=90] at ($(d)!0.5!(a)$) {};
\draw[line] (a) -- (b);
\draw[dashline] (a) -- (c);
\draw[dashline] (a) -- (k);
\draw[dashline] (a) -- (l);
\draw[dashline] (b) -- (f);
\draw[dashline] (b) -- (g);
\draw[dashline] (b) -- (h);
\node[above] at ($(a)!0.5!(b)$) {$3$};
\node[above] at ($(a)!0.9!(k)$) {$2$};
\node[above] at ($(f)!0.1!(b)$) {$2$};
\node[left=3mm] at (a) {$u$};
\node[right=3mm] at (b) {$v$};
\node[right=3mm] at (k) {$v'$};
\node[left=3mm] at (f) {$u'$};
\end{tikzpicture}
\caption{The base case for the high-degree lower bound proof. The cut
  edge $(u,v)$ for two high-degree nodes $u$ and $v$ is shown as a
  heavy line, the smaller lower bound edges $(u,v')$ and $(v,u')$ as
  dotted lines; $u'$ and $v'$ may or may not be high-degree
  nodes. Indeed $(w(u,v')+w(v,u'))/2 = 2\leq w(u,v)=3$, but it does
  not hold that $(2w(u,v')+2w(v,u'))/2=4\leq w(u,v)=3$, so even if
  $u'$ and $v'$ are low-degree nodes, the degree of either $u'$ or
  $v'$ alone does not suffice to determine whether the contribution
  from an edge out of either $u$ or $v$ may be doubled.}
\label{fig:highdegreecounterex}
\end{figure}
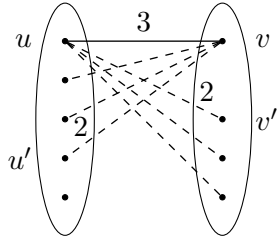

Figure~\ref{fig:highdegreecounterex} illustrates the base case of the
proof. This also shows why it is not possible to improve the bound by
low-degree considerations for the edges chosen for the lower
bound. For instance, it is not true that since $v'$ is not a
high-degree vertex, the contribution from edge $(u,v')$ can be counted
twice. This is only possible if \emph{all} vertices adjacent to $u$
are low degree. If we can keep an estimate of the maximum degree of
any adjacent vertex to $v$ for all free vertices $v\in F$ it is easy
to determine whether the lower bound contribution from high-degree
vertex $v$ can be multiplied by two, namely if this maximum degree is
smaller than $f_1$. The estimate can be the static maximum degrees in
$G$, and can be updated when a connected component contribution is
computed (see next section).

For each new subproblem, we can update the $T$ contribution in
$O(\deg(v))$ time steps, amortized over all vertices such that the
total time spent is $O(n+m)$ steps. To do this we maintain for each
free vertex of $(U_0,U_1)$, a) its \emph{free degree}, b) an index of
edges scanned so far, c) a count of seen (free) edges, and finally d)
the total weight of the seen edges. In total, four counts are
maintained per free vertex and for each subset $V_0\setminus U_0$ and
$V_1\setminus U_1$, making this relatively expensive in terms of space
needed per subproblem.

The free degree $\deg'(v)$ of vertex $v$ in $(U_0,U_1)$ is the number
of adjacent edges to free vertices (and is the degree of $v$ in the
induced subgraph). Initially, the free degree of vertex $v$ is just
its degree; the free degree is decreased each time a neighbor of $v$
is assigned to a subset. The count of seen edges shall be maintained
as $\max(\deg'(v)-f_i+1,0)\geq 0$ for each free node $v$, and we
maintain also the sum of the weights of these free edges. Note that
the number of seen edges for some free node $v$ (and their weight) may
have to be updated both as a result of an edge $(v,u)$ out of $v$
becoming assigned, or by some other node becoming assigned to subset
$V_i$. The three remaining invariant properties of the counts are now
maintained as follows. When a branching vertex $u$ is assigned to a
subset, either of $f_i$ is decreased by one; thus, for each free
vertex one more vertex must be counted as seen, and this is
accomplished by scanning edges of the vertex until an edge whose
endpoint is not fixed is met. The weight of this edge is added to the
total weight of seen edges. This is eventually the lower bound
contribution of the vertex. For vertices whose free degree $\deg'(v)$
decrease (that is, free vertices adjacent to the branching vertex $u$)
there are two cases. If edge $(v,u)$ has already been seen, that is if
edge $(v,u)$ is indexed before the currently scanned edge of $v$, then
the weight of the edge can simply be subtracted from the total weight
of seen edges of $v$.  Since we know the position $i$ of vertex $u$ in
the adjacency list of $v$, we have to subtract if $i$ is smaller than
the number of scanned edges of $v$. If on the other hand the edge
$(v,u)$ has not yet been seen (and scanned), the last (highest
weighted) seen edge must be made unseen (and its weight subtracted
from the total weight of seen edges), which is done by scanning back
from the currently scanned edge until a free edge is found. In total
one forwards and at most one backwards scan is made per adjacency array.

For unweighted graphs, the lower bound can be computed easily. The
contribution from a (high-degree) vertex is simply
$\max(0,\deg'(v)-f_0+1)$. Currently, we have only implemented the
general case, which is more costly (by some constant factor) than the
special, unweighted case.

Since this strengthening only contributes a nontrivial lower bound
increase in the presence of high-degree vertices (relative to the
subproblem $(U_0,U_1)$), we only compute the contribution if the
maximum degree $\deg'(v)$ of any free vertex $v$ is larger than
$f_1$. We can maintain an approximation of this maximum degree, namely
the maximum degree from the parent subproblem, and use this to trigger
the lower bound computation.

Considering just the total number of free edges gives a trivial, even
weaker lower bound for the term $w(V_0\setminus U_0,V_1\setminus
U_1)$. Assume there are more than $f_0(f_0-1)/2+f_1(f_1-1)/2$ edges in
the subgraph of $G$ induced by the free nodes $F$ ($f_0$ and and $f_1$
being the number of free nodes to be assigned to the subsets $U_0$ and
$U_1$, respectively). The sum of the weights of the least weight such
edges is a lower bound, since at least that number of edges must be in
the cut of any partition of $F$ into subsets of sizes $f_0$ and
$f_1$. If the induced subgraph has such a large number edges, at least
one will be of high degree larger than $f_0-1$ and $f_1-1$, and thus the
high-degree bound described above will be at least as strong, since it
counts at least as many edges of at least the same weight.

\subsection{The lower bound: a large unassigned component}
\label{sec:largecomponent}

Assume there are no high-degree vertices in the sense discussed
above. We make the observation that if there is a $k$-connected
component in the graph induced by $F$ of size greater than $f_0$, then
at least $k$ edges will cross in any partition of the $F$ vertices
into subsets of size $f_0$ and $f_1$ where here $f_0\geq f_1$. The sum
of the weights of the $k$ lightest edges in such a $k$-connected
component will thus be a lower bound. More generally, the
(unconstrained) minimum cut of the singly connected component of size
greater than $f_0$ will be a lower bound on the size constrained
cut. Let $C(U_0,U_1)$ be either the weight of the $k$ smallest edges
in a $k$-connected component of size greater than $f_0$ in the
subgraph induced by $F$, or the value of a minimum cut in such a
singly connected component.

\begin{proposition}
  \label{prop:component}
  For any given subproblem $(U_0,U_1)$ it holds that
  \begin{eqnarray*}
    C(U_0,U_1) & \leq & w(V_0\setminus U_0,V_1\setminus U_1)
  \end{eqnarray*}
  for any completion $(V_0,V_1)$.
\end{proposition}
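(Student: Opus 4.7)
The plan is to argue that in any completion $(V_0,V_1)$, the edges crossing between $V_0\setminus U_0$ and $V_1\setminus U_1$ restricted to a single connected component $C$ of the free-vertex subgraph already account for $C(U_0,U_1)$, provided $C$ is large enough. Fix such a completion, and let $C\subseteq F$ be either a $k$-edge-connected component or a singly connected component of the subgraph induced by $F$, of size $|C|>f_0$. Since $|V_0\setminus U_0|=f_0<|C|$, the set $C$ cannot be contained in $V_0\setminus U_0$; and since $C\subseteq F=(V_0\setminus U_0)\cup(V_1\setminus U_1)$, the component $C$ must also intersect $V_1\setminus U_1$. Thus the partition $(V_0\setminus U_0,V_1\setminus U_1)$ induces a nontrivial cut on $C$.

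First I would handle the $k$-edge-connected case. By definition of $k$-edge-connectivity, any nontrivial cut of $C$ contains at least $k$ edges, so the induced cut contributes at least $k$ edges to $w(V_0\setminus U_0,V_1\setminus U_1)$. Since edge weights are non-negative, the total weight of any $k$ edges in $C$ is bounded below by the sum of the $k$ smallest edge weights in $C$, which is exactly the first option for $C(U_0,U_1)$. This yields the claim in that case.

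Next I would handle the minimum-cut case, where $C$ is just singly connected and $C(U_0,U_1)$ equals the value of a minimum cut of $C$ (treated as an independent weighted graph). The restriction of $(V_0\setminus U_0,V_1\setminus U_1)$ to $C$ is a nontrivial partition of $C$ into two nonempty parts (by the size argument above), and so it is a valid cut of $C$. Its weight is therefore at least the unconstrained minimum cut value of $C$, which is $C(U_0,U_1)$. Finally, every edge counted in the induced cut of $C$ is an edge between $V_0\setminus U_0$ and $V_1\setminus U_1$, so $C(U_0,U_1)\leq w(V_0\setminus U_0,V_1\setminus U_1)$.

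The only subtle step is the size argument that forces $C$ to split across the completion: it relies crucially on the hypothesis $|C|>f_0\geq f_1$, without which $C$ could lie entirely in one side of the cut and contribute zero. Everything else reduces to the definitions of edge-connectivity and minimum cut together with non-negativity of the weights; no induction is needed.
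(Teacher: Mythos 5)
Your argument is correct and follows essentially the same route as the paper's proof: the size constraint $|C|>f_0\geq f_1$ forces the large component to have vertices on both sides of any completion, so its induced cut is nontrivial, and then $k$-edge-connectivity (resp.\ the unconstrained minimum cut) bounds the crossing weight from below by the $k$ lightest edges (resp.\ the minimum cut value). Your write-up is just a slightly more explicit version of the same argument, including the observation that the relevant edges all lie between $V_0\setminus U_0$ and $V_1\setminus U_1$.
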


\begin{proof}
  If there is a $k$-connected component larger than $f_0$ then there
  will be some nodes of this component in either subset of any partition
  of $F$ into subsets of sizes $f_0$ and $f_1$ (assuming $f_0\geq f_1$).
  At least $k$ edges of the large $k$-connected component must cross
  between $V_0\setminus U_0$ and $V_1\setminus U_1$. The sum of the weights
  of the lightest such $k$ edges are therefore a lower bound on the
  value of any cut. Likewise is any size-unconstrained minimum cut in
  the subgraph induced by the large component a lower bound.
\end{proof}

On the other hand, if there is a high-degree vertex in the sense
explained in the last section, then there is a connected component of
size at least $f_0+1$. Since the high-degree bound counts the
weights of particular edges (out of the high-degree vertices), this
bound is at least as strong as the connected components bound.

Here we settle for only a contribution of one lightest edge of a
large, connected component. The component is computed by a simple
breadth-first search traversal in $O(f+m)$ time steps, which can be
expensive compared to the other lower bound contributions.  The
computation is therefore triggered by maintaining an approximate size
of the largest component. Again, this approximation is just the size
of the largest component from the parent subproblem, and is updated
when a connected components computation is performed. During the graph
traversal we also update the maximum adjacent degree for each free
vertex. When a branching vertex is assigned, only the component to
which this vertex belongs can be affected. It would therefore be
possible to redo the connected component computation for these
affected vertices, and this could perhaps be of advantage for the
implementation, although not in the worst case; here, suitable dynamic
connected components algorithms would have to be used. This
optimization is not implemented currently, since it would require
extra arrays for maintaining component numbers and sizes and storing
the smallest edge weight for the components. Using DFS~\cite{Tarjan72}
for the graph traversal, it would be possible to compute the
2-connected components also in linear time.

Even if there is no large, $k$-connected component in the induced
subgraph, there may still be a lower bound contribution. For instance,
if the smallest $k$-connected component is larger than $f_1$, the
smaller of the vertex sets, then \emph{some} connected component will
cross between $U_0\setminus V_0$ and $U_1\setminus V_1$; thus, the sum
of the weights of the $k$ least cost edges over all $k$-connected
components will be a lower bound on the cut value; as above the least
(unconstrained) minimum cut would also be a lower bound. More
generally, if it is \emph{not} possible to pack a subset of the
connected components into a subset of size $f_1$, then at least one of
the components must have vertices on both sides of any cut. Again, the
smallest weight edge in the induced subgraph will be a lower bound on
the minimum cut value, and so will the minimum of all
size-unconstrained minimum cuts. Determining this contribution implies
determining that a corresponding subset sum problem does not have a
solution.  The subset sum problem is in itself
NP-hard~\cite{GareyJohnson79}, but might be small and special enough
that it could make sense to attempt a solution~\cite{Pisinger99}. We
have not pursued this idea further in our current implementation.

A possibly stronger bound is achieved by actually computing a minimum
cut value in the graph induced by the large, singly connected
component. An easily implementable algorithm~\cite{StoerWagner97} runs
in $O(mn+n^2\log n)$ time steps; a better, randomized
algorithm~\cite{Karger00} in $O(m\log^3 n)$ steps.  We have also not
yet experimented with this strengthening of the lower bound.

\subsection{Maintaining an upper bound}

The balancing step for the lower bound of the $w(V_0\setminus
U_0,V_1)+w(U_0,V_1\setminus U_1)$ term determines an explicit
assignment of the free vertices to the two subsets, that is a specific
completion. We can use this completion as an upper bound on the best
possible solution for the subproblem $(U_0,U_1)$.  Computing the cut
value of this partition would take $O(n+m)$ time, and might be too
expensive to do repeatedly.  However, for the case where the a large
connected component bound may apply, this computation could be done
almost for free. Furthermore, when branching on a vertex and creating
new subproblems, it is easy to determine which vertices will change in
the forced completions of the new subproblems. If only few vertices
change, the cut value of the completion can be updated more cheaply
similarly to what is done for instance in the Lin-Kernighan
heuristic~\cite{FiducciaMattheyses82,KernighanLin70,Traff06:kpartition}.

This computed upper bound is a solution candidate. Also, when upper and
lower bounds meet, the lower bound is tight for the subproblem
$(U_0,U_1)$, and no further branching is needed. Note, that this means
that there are no edges in the cut $(V_0\setminus U_0,V_1\setminus
U_1)$ for the particular completion $(V_0,V_1)$ induced by the
rebalancing lower bound.  We have not implemented this potential
improvement so far.

\subsection{Completion and branching rules}

A subproblem is essentially solved if either $|U_0|=s_0$ or
$|U_1|=s_1$: all free vertices can be assigned to the other subset.
Furthermore, if only one vertex is missing from, e.g., $U_0$, then the
vertex which has the smallest $D_1[v]$ value to the other subset plus
the smallest sum of free edges (which will all cross the cut) can be
assigned to $V_0$, and the remaining free vertices to subset $V_1$. No
other assignment can lead to a smaller cut value of the completion
$(V_0,V_1)$.

Another completion rule follows from the observation that the
completion implied by the lower bound with rebalancing
(Proposition~\ref{prop:rebalancing}) is an optimal solution if all free
vertices have free degree zero. This can easily be checked, and the
corresponding solution generated; this is also implemented, and led to
a reduction of a few (tens of) subproblems to be explored; since this
comes at virtually no cost (it requires only maintaining the number of
degree zero free vertices), this check and completion is always done
when rebalancing is enabled. More generally, if it can be inferred
that in this completion, there are no edges between sets $V_0\setminus
U_0$ and $V_1\setminus U_1$, the completion is optimal.

Other observations allows to reduce the worst-case number of
subproblems that needs to be generated. We state two such
observations:
\begin{enumerate}
\item
  For unassigned vertices with no free edges, not all possible
  assignments need to be checked. In particular, if there are $n$ such
  vertices, only $n+1$ of the possible $2^n$ assignments can lead to
  an completion value. The $i$th such subproblem for $i=0,\ldots n$
  would assign the $i$ nodes with the smallest value of
  $\delta(v)=D_1[v_i]-D_0[v_i]$ to $V_0$, and the remaining $n-i$ nodes
  to $V_1$. Since these nodes have no free edges, the only
  contribution to the cut can come from the edges to the assigned
  vertices in $U_0$ or $U_1$ as counted in $D_1[v]$ and
  $D_0[v]$. Swapping a vertex thus assigned to $V_0$ would lead to a
  larger cut value.
\item
  If there is a free edge between two nodes $u$ and $v$ each with degree
  one, the contribution to the cut of a completion is determined by
  $w(u,v)$ and the weight of the edges to the assigned vertices in $U_0$
  and $U_1$. Only one of the subproblems $(U_0\cup\{u\},U_1\cup\{v\})$
  or $(U_0\cup\{v\},U_1\cup\{u\})$ can lead to an optimal completion,
  namely the one with the smallest $D_0[u]+D_1[v]$ or $D_0[v]+D_1[u]$
  value. Therefore, only 3 instead of 4 possible subproblems must be
  generated.
\item
  In general, for a $k$-clique only $k+1$ instead of $2^k$ subproblems
  needs to be generated.
\end{enumerate}
These (and other, similar) observations can be used as
$n$-way branching rules, instead of the binary branching rule that
just generates two subproblems by assigning the chosen branching
vertex to either $U_0$ or $U_1$.

In our experiments, none of the first two rules above gave an
advantage, and were often detrimental in that too many subproblems
were generated too early. Thus, the benefit, if any, is not clear at
the moment, and such branching rules have not been considered further
here.

\section{Solving the weighted graph bipartitioning problem}

We use the task-parallel Pheet C++ framework as a general framework to
implement branch-and-bound algorithms. This is described extensively
in~\cite{Wimmer14:diss}, and briefly
in~\cite{Traff13:strategies,Traff14:priosched}. The basic idea is to
represent subproblems as tasks that can be executed in parallel when
enough have been created, and let the framework take care of the
selection of tasks in a priority-respecting order. To this end, Pheet
supports \emph{scheduling strategies} where tasks can be spawned with
an associated priority. A Pheet branch-and-bound task is shown in
Figure~\ref{fig:bbtask}. When a task is processed, it is first checked
that the task's lower bound is still smaller than the currently best,
feasible solution (a better solution could have been found between
the time the task was spawned and the time it is being processed).
The computed branching vertex is used to split the subproblem into two
(or more, but this is not shown here) new subproblems. For either, it
is checked whether it can already be completed, and in that case
whether it has lead to a new, better, global solution. If not, a new
task is spawned with some computed priority. The Pheet framework
will ensure that the subproblem is eventually processed by some
available processing unit, preferably in good (but possibly relaxed)
priority order.

\begin{figure}
\begin{lstlisting}[mathescape=true,columns=flexible]
template <class Pheet, 
template <class P, class SubProblem> class Logic, 
template <class P, class SubProblem> class SchedulingStrategy, size_t MaxSize>
void StrategyBBGraphBipartitioningTask<Pheet, Logic, SchedulingStrategy, MaxSize>::
operator()() {
	if(sub_problem->get_lower_bound() >= sub_problem->get_global_upper_bound()) {
		pc.num_irrelevant_tasks.incr();
		return;
	}

	SubProblem* sub_problem2 =
			sub_problem->split(pc.subproblem_pc);

	if(sub_problem->can_complete(pc.subproblem_pc)) {
		sub_problem->complete_solution(pc.subproblem_pc);
		sub_problem->update_solution(best, pc.subproblem_pc);
	}
	else if(sub_problem->get_lower_bound() < sub_problem->get_global_upper_bound()) {
		Pheet::template
			spawn_prio<Self>(strategy(sub_problem),
				sub_problem, best, pc);
		sub_problem = NULL;
	}

	if(sub_problem2->can_complete(pc.subproblem_pc)) {
		sub_problem2->complete_solution(pc.subproblem_pc);
		sub_problem2->update_solution(best, pc.subproblem_pc);
		delete sub_problem2;
	}
	else if(sub_problem2->get_lower_bound() < sub_problem2->get_global_upper_bound()) {
		Pheet::template
			spawn_prio<Self>(strategy(sub_problem2),
				sub_problem2, best, pc);
	}
	else {
		delete sub_problem2;
	}
}
\end{lstlisting}
\caption{A Pheet branch-and-bound task with binary branching.}
\label{fig:bbtask}
\end{figure}

\subsection{Initial subproblems and upper bound}

Branch-and-bound algorithms can benefit immensely from having a good
initial feasible solution or upper bound. In our current
implementation we use a simple, greedy strategy (corresponding to one
iteration of the minimum cut algorithm in~\cite{StoerWagner97}) to
produce an initial solution. Using a standard heuristic package like
METIS~\cite{karypis11} or SCOTCH~\cite{ChevalierPellegrini08} would be
a natural possibility to get a strong, initial upper bound. Easily
computable solutions as provided by variations of the Lin-Kernighan
heuristic are another
possibility~\cite{KernighanLin70,Traff06:kpartition}.

\subsection{Choosing good subproblems in parallel}

Branch-and-bound normally consider subproblems in some prioritized
order, with problems that are likely to lead to an improved solution
or to being cut off being preferred to other problems. The subproblem
priority order can have a large influence on the concrete performance
of the branch-and-bound procedure, even if there is no worst-case
difference. Here, we prioritize in by the difference lower bound and
an estimated upper bound, such that subproblems that are close to
their upper bound will be processed early. Other possibilities might
be worthwhile to explore.

The Pheet framework supports the possibility of prioritizing tasks and
processes tasks in (relaxed) priority order. Pheet relies on various,
relaxed, concurrent priority queues for this, which can provide
certain semantic and performance guarantees. We refer
to~\cite{Wimmer14:diss} for definitions of such semantics as well as
algorithmic and implementation details.

\subsection{Branching rules}

For the choice of branching vertex for each subproblem there a
likewise many possibilities, and the choice of branching rule can
likewise have a large effect on practical performance. In our current
implementation we branch on the vertex that will lead to the estimated
largest increase in the lower bound when assigned to either of the
subsets.

\section{Benchmark results}

We now present a selection of benchmark results for solving graph
bipartitioning problems using the Pheet framework with the various
lower bound contributions developed in the previous sections.

The experiments reported here are for simple, Erd\"os-R\'enyi random
graphs as in~\cite{Wimmer14:diss}. The graphs have $n$ nodes, and
edges are chosen with a given, uniform probability. The bounds and the
framework should be tested with standard test instances, for instance
those used
in~\cite{DellingGoldbergRazenshteynWerneck11,DellingGoldbergRazenshteynWerneck12,DellingWerneck12}. The
graphs are either weighted, in which case edge weights are chosen
uniformly at random with $w\in [1,1000]$; or unweighted, which we
achieve by choosing weights $w\in [1,1]$. We give results for sparser
graphs with edge probability $0.1$ (which is, asymptotically, of
course rather dense), medium dense graphs with edge probability $0.5$,
and dense graphs with edge probability $0.75$, and finally complete
graphs with edge probability $1$.

\subsection{Lower bound contributions}

We first investigate the difference between the different lower bound
contributions described in Section~\ref{sec:basicrebalance},
Section~\ref{sec:highdegree} and Section~\ref{sec:largecomponent}. To
do this, we solve the benchmark problems sequentially, using a
depth-first (non-prioritized) order on the generated subproblems. We
record the time to solution and relate that to the number of
subproblems that were explored.  Starting from the trivial lower bound
we track the reduction in number of subproblems and hopefully
proportional reduction in running time by gradually strengthening the
lower bound by adding the rebalancing contribution
(Proposition~\ref{prop:rebalancing}), the high-degree vertex
contribution (Proposition~\ref{prop:highdegree}), and the connected
components contribution (Proposition~\ref{prop:component}).

We give results from 5 differently generated random graphs (in Pheet
with seeds $0,1,2,3,4$) from each of the four categories. In addition
to the total time to solution and the number of explored subproblems,
we also give the number of times a new solution was found, and the
time at which the optimal (last) solution was found. For the cases
where the optimal solution is found late, a better, initial solution
could be of help. To check this, we also ran the experiments using the
optimal solution as initial solution; this gives an objective count of
how many problems must be explored to prove optimality, and is thus
indicative of the strength (or weakness) of the lower bound (for the
given branching rule; a different branching rule could change this;
the experiment is not sensitive to the choice of subproblem priority);
we only did this experiment for the strongest version of the lower
bound, and here we did not measure the actual running time; we just
list the, in most cases, smaller number of explored subproblems.

The sequential experiments were carried out on an Intel-based desktop
computer with a 4-core 3.4GHz Intel i7-2600 processor. We used
\texttt{gcc 4.7.2} under Debian 4.7.2-5 Linux.  All running times in
seconds, and the times recorded here for a single run only; on an
unloaded desktop the running times appear rather stable. The running
times are only indicative; whereas the various subproblem counts are
deterministic and exactly reproducible.

\subsubsection{Sparse graphs}

\begin{table}
\begin{center}
\begin{tabular}{|rrr|r|r|rr|r|}
\hline
$n$ & Prob. & $\max w$ & Time & Cut & Solutions & Subproblems & Opt.\ Time \\
\hline
40 & 0.1 & 1000 & 0.027267 & 7770 & 7 & 24662 & 0.026160 \\
40 & 0.1 & 1000 & 0.065237 & 9439 & 13 & 69439 & 0.063928 \\
40 & 0.1 & 1000 & 0.010008 & 5121 & 5 & 12520 & 0.009737 \\
40 & 0.1 & 1000 & 0.013096 & 6523 & 10 & 16627 & 0.012922 \\
40 & 0.1 & 1000 & 0.007524 & 6883 & 6 & 9655 & 0.005304 \\
50 & 0.1 & 1000 & 0.320722 & 12829 & 18 & 380379 & 0.313434 \\
50 & 0.1 & 1000 & 0.85727 & 14461 & 24 & 1012481 & 0.834905 \\
50 & 0.1 & 1000 & 0.450444 & 9096 & 23 & 529688 & 0.433826 \\
50 & 0.1 & 1000 & 0.343937 & 10150 & 20 & 391779 & 0.325539 \\
50 & 0.1 & 1000 & 0.862478 & 12438 & 20 & 942418 & 0.854184 \\
60 & 0.1 & 1000 & 12.0744 & 17502 & 29 & 11686971 & 11.278630 \\
60 & 0.1 & 1000 & 32.5406 & 22283 & 24 & 29141096 & 30.874791 \\
60 & 0.1 & 1000 & 6.74624 & 14585 & 20 & 6517009 & 6.500285 \\
60 & 0.1 & 1000 & 16.389 & 14794 & 26 & 15644183 & 16.296062 \\
60 & 0.1 & 1000 & 24.7185 & 20752 & 38 & 22683390 & 21.663690 \\
\hline
40 & 0.1 & 1 & 0.036729 & 19 & 5 & 45891 & 0.015383 \\
40 & 0.1 & 1 & 0.044857 & 24 & 2 & 57338 & 0.041048 \\
40 & 0.1 & 1 & 0.005785 & 14 & 1 & 8223 & 0.000006 \\
40 & 0.1 & 1 & 0.075284 & 17 & 12 & 93339 & 0.074613 \\
40 & 0.1 & 1 & 0.018821 & 17 & 4 & 23674 & 0.011489 \\
50 & 0.1 & 1 & 0.73198 & 28 & 6 & 781139 & 0.711416 \\
50 & 0.1 & 1 & 2.83857 & 32 & 10 & 3056976 & 2.755318 \\
50 & 0.1 & 1 & 0.515393 & 25 & 2 & 626003 & 0.419959 \\
50 & 0.1 & 1 & 1.20943 & 25 & 11 & 1273991 & 1.096330 \\
50 & 0.1 & 1 & 4.80033 & 33 & 14 & 4904437 & 4.631338 \\
60 & 0.1 & 1 & 21.8007 & 42 & 6 & 19637124 & 16.885930 \\
60 & 0.1 & 1 & 97.601 & 53 & 10 & 84588305 & 75.646062 \\
60 & 0.1 & 1 & 32.8037 & 39 & 12 & 29165329 & 22.103448 \\
60 & 0.1 & 1 & 9.48105 & 35 & 5 & 8931547 & 9.305966 \\
60 & 0.1 & 1 & 126.377 & 50 & 11 & 113479717 & 112.269199 \\
\hline
\end{tabular}
\end{center}
\caption{Sparse random graphs with edge probability $0.1$,
  $w\in[1,1000]$ and $w\in[1,1]$. Trivial lower bound.}
\label{tab:sparse-norebal}
\end{table}

\begin{table}
\begin{center}
\begin{tabular}{|rrr|r|r|rr|r|}
\hline
$n$ & Prob. & $\max w$ & Time & Cut & Solutions & Subproblems & Opt.\ Time \\
\hline
40 & 0.1 & 1000 & 0.007509 & 7770 & 7 & 3164 & 0.005886 \\
40 & 0.1 & 1000 & 0.006289 & 9439 & 13 & 2888 & 0.004836 \\
40 & 0.1 & 1000 & 0.001719 & 5121 & 5 & 735 & 0.001296 \\
40 & 0.1 & 1000 & 0.003596 & 6523 & 10 & 1843 & 0.003427 \\
40 & 0.1 & 1000 & 0.002714 & 6883 & 6 & 1178 & 0.000321 \\
50 & 0.1 & 1000 & 0.057015 & 12829 & 18 & 26314 & 0.052750 \\
50 & 0.1 & 1000 & 0.078241 & 14461 & 24 & 45226 & 0.071687 \\
50 & 0.1 & 1000 & 0.03234 & 9096 & 23 & 20536 & 0.021565 \\
50 & 0.1 & 1000 & 0.031847 & 10150 & 20 & 19785 & 0.017051 \\
50 & 0.1 & 1000 & 0.039503 & 12438 & 20 & 23618 & 0.038913 \\
60 & 0.1 & 1000 & 0.160308 & 17502 & 29 & 77425 & 0.053616 \\
60 & 0.1 & 1000 & 0.771585 & 22283 & 24 & 394574 & 0.421817 \\
60 & 0.1 & 1000 & 0.160184 & 14585 & 20 & 77410 & 0.065352 \\
60 & 0.1 & 1000 & 0.327008 & 14794 & 26 & 164106 & 0.273994 \\
60 & 0.1 & 1000 & 1.23954 & 20752 & 38 & 641762 & 0.544033 \\
\hline
40 & 0.1 & 1 & 0.004571 & 19 & 5 & 3678 & 0.000330 \\
40 & 0.1 & 1 & 0.004976 & 24 & 2 & 4116 & 0.003736 \\
40 & 0.1 & 1 & 0.001199 & 14 & 1 & 899 & 0.000010 \\
40 & 0.1 & 1 & 0.004194 & 17 & 12 & 3482 & 0.003519 \\
40 & 0.1 & 1 & 0.002543 & 17 & 4 & 1965 & 0.000275 \\
50 & 0.1 & 1 & 0.04406 & 28 & 6 & 30093 & 0.041470 \\
50 & 0.1 & 1 & 0.096952 & 32 & 10 & 67991 & 0.074959 \\
50 & 0.1 & 1 & 0.088082 & 25 & 2 & 60791 & 0.044789 \\
50 & 0.1 & 1 & 0.037535 & 25 & 11 & 25168 & 0.008754 \\
50 & 0.1 & 1 & 0.103074 & 33 & 14 & 70069 & 0.068494 \\
60 & 0.1 & 1 & 0.236067 & 42 & 6 & 128962 & 0.029440 \\
60 & 0.1 & 1 & 3.96175 & 53 & 10 & 2410554 & 1.395450 \\
60 & 0.1 & 1 & 0.990479 & 39 & 12 & 558587 & 0.075503 \\
60 & 0.1 & 1 & 0.251598 & 35 & 5 & 139853 & 0.209421 \\
60 & 0.1 & 1 & 4.33112 & 50 & 11 & 2599774 & 2.396716 \\
\hline
\end{tabular}
\end{center}
\caption{Sparse random graphs with edge probability $0.1$,
  $w\in[1,1000]$ and $w\in[1,1]$. Lower bound with rebalancing contribution.}
\label{tab:sparse-rebal}
\end{table}

\begin{table}
\begin{center}
\begin{tabular}{|rrr|r|r|rr|r|}
\hline
$n$ & Prob. & $\max w$ & Time & Cut & Solutions & Subproblems & Opt.\ Time \\
\hline
40 & 0.1 & 1000 & 0.004721 & 7770 & 7 & 3154 & 0.003703 \\
40 & 0.1 & 1000 & 0.004272 & 9439 & 13 & 2881 & 0.003260 \\
40 & 0.1 & 1000 & 0.001172 & 5121 & 5 & 724 & 0.000881 \\
40 & 0.1 & 1000 & 0.00273 & 6523 & 10 & 1837 & 0.002567 \\
40 & 0.1 & 1000 & 0.001854 & 6883 & 6 & 1178 & 0.000302 \\
50 & 0.1 & 1000 & 0.048337 & 12829 & 17 & 26293 & 0.044498 \\
50 & 0.1 & 1000 & 0.080309 & 14461 & 24 & 45197 & 0.073267 \\
50 & 0.1 & 1000 & 0.035161 & 9096 & 23 & 20489 & 0.023535 \\
50 & 0.1 & 1000 & 0.03498 & 10150 & 19 & 19772 & 0.018809 \\
50 & 0.1 & 1000 & 0.043134 & 12438 & 20 & 23592 & 0.042481 \\
60 & 0.1 & 1000 & 0.17328 & 17502 & 29 & 77404 & 0.058439 \\
60 & 0.1 & 1000 & 0.842746 & 22283 & 24 & 394561 & 0.458983 \\
60 & 0.1 & 1000 & 0.174728 & 14585 & 20 & 77410 & 0.071872 \\
60 & 0.1 & 1000 & 0.3576 & 14794 & 23 & 163998 & 0.299697 \\
60 & 0.1 & 1000 & 1.35194 & 20752 & 38 & 641762 & 0.594848 \\
\hline
40 & 0.1 & 1 & 0.005176 & 19 & 5 & 3673 & 0.000348 \\
40 & 0.1 & 1 & 0.005727 & 24 & 2 & 4113 & 0.004282 \\
40 & 0.1 & 1 & 0.00137 & 14 & 1 & 899 & 0.000006 \\
40 & 0.1 & 1 & 0.004799 & 17 & 12 & 3453 & 0.004016 \\
40 & 0.1 & 1 & 0.002827 & 17 & 4 & 1965 & 0.000286 \\
50 & 0.1 & 1 & 0.049363 & 28 & 6 & 30063 & 0.046491 \\
50 & 0.1 & 1 & 0.109406 & 32 & 10 & 67970 & 0.084869 \\
50 & 0.1 & 1 & 0.098004 & 25 & 2 & 60791 & 0.049769 \\
50 & 0.1 & 1 & 0.041763 & 25 & 11 & 25163 & 0.010042 \\
50 & 0.1 & 1 & 0.113649 & 33 & 14 & 70034 & 0.075498 \\
60 & 0.1 & 1 & 0.261038 & 42 & 6 & 128950 & 0.032600 \\
60 & 0.1 & 1 & 4.3952 & 53 & 10 & 2410551 & 1.544418 \\
60 & 0.1 & 1 & 1.08553 & 39 & 12 & 558587 & 0.084374 \\
60 & 0.1 & 1 & 0.280103 & 35 & 5 & 139844 & 0.233175 \\
60 & 0.1 & 1 & 4.77637 & 50 & 11 & 2599761 & 2.651952 \\
\hline
\end{tabular}
\end{center}
\caption{Sparse random graphs with edge probability $0.1$,
  $w\in[1,1000]$ and $w\in[1,1]$. Lower bound with rebalancing and 
  high-degree contributions.}
\label{tab:sparse-highdegree}
\end{table}

\begin{table}
\begin{center}
\begin{tabular}{|rrr|r|r|rrr|r|}
\hline
$n$ & Prob. & $\max w$ & Time & Cut & Solutions & Subproblems & With
optimal & Opt.\ Time \\
\hline
40 & 0.1 & 1000 & 0.007153 & 7770 & 7 & 3121 & 1701 & 0.005560 \\
40 & 0.1 & 1000 & 0.006443 & 9439 & 13 & 2842 & 1228 & 0.004901 \\
40 & 0.1 & 1000 & 0.001643 & 5121 & 5 & 719 & 280 & 0.001208 \\
40 & 0.1 & 1000 & 0.003775 & 6523 & 10 & 1837 & 1260 & 0.003540 \\
40 & 0.1 & 1000 & 0.002586 & 6883 & 6 & 1144 & 1051 & 0.000390 \\
50 & 0.1 & 1000 & 0.077288 & 12829 & 17 & 25850 & 13812 & 0.070880 \\
50 & 0.1 & 1000 & 0.1288 & 14461 & 24 & 44522 & 24324 & 0.117323 \\
50 & 0.1 & 1000 & 0.051628 & 9096 & 23 & 20021 & 10235 & 0.033532 \\
50 & 0.1 & 1000 & 0.054827 & 10150 & 19 & 19490 & 11494 & 0.028375 \\
50 & 0.1 & 1000 & 0.069346 & 12438 & 20 & 23157 & 7214 & 0.068327 \\
60 & 0.1 & 1000 & 0.288285 & 17502 & 29 & 75860 & 51678 & 0.091989 \\
60 & 0.1 & 1000 & 1.47682 & 22283 & 24 & 392842 & 250649 & 0.796601 \\
60 & 0.1 & 1000 & 0.285672 & 14585 & 20 & 76669 & 47630 & 0.115380 \\
60 & 0.1 & 1000 & 0.600668 & 14794 & 23 & 162745 & 66431 & 0.501793 \\
60 & 0.1 & 1000 & 2.26215 & 20752 & 38 & 627491 & 426709 & 0.984418 \\
\hline
40 & 0.1 & 1 & 0.005906 & 19 & 5 & 2907 & 2669 & 0.000434 \\
40 & 0.1 & 1 & 0.006806 & 24 & 2 & 3242 & 2218 & 0.005147 \\
40 & 0.1 & 1 & 0.001549 & 14 & 1 & 866 & 866 & 0.000008 \\
40 & 0.1 & 1 & 0.005656 & 17 & 12 & 3195 & 981 & 0.004748 \\
40 & 0.1 & 1 & 0.00333 & 17 & 4 & 1654 & 1524 & 0.000328 \\
50 & 0.1 & 1 & 0.059948 & 28 & 6 & 22192 & 13202 & 0.056434 \\
50 & 0.1 & 1 & 0.134392 & 32 & 10 & 49507 & 15853 & 0.103304 \\
50 & 0.1 & 1 & 0.114136 & 25 & 2 & 47341 & 37054 & 0.057939 \\
50 & 0.1 & 1 & 0.050109 & 25 & 11 & 18543 & 13614 & 0.011498 \\
50 & 0.1 & 1 & 0.135859 & 33 & 14 & 50585 & 24836 & 0.089800 \\
60 & 0.1 & 1 & 0.348262 & 42 & 6 & 96073 & 84417 & 0.040723 \\
60 & 0.1 & 1 & 5.77289 & 53 & 10 & 1728577 & 1526818 & 2.003938 \\
60 & 0.1 & 1 & 1.31056 & 39 & 12 & 404049 & 366201 & 0.095397 \\
60 & 0.1 & 1 & 0.354385 & 35 & 5 & 100256 & 42995 & 0.294146 \\
60 & 0.1 & 1 & 6.1523 & 50 & 11 & 1873524 & 1172829 & 3.360075 \\
\hline
\end{tabular}
\end{center}
\caption{Sparse random graphs with edge probability $0.1$,
  $w\in[1,1000]$ and $w\in[1,1]$. Lower bound with rebalancing, 
  high-degree and large connected component contributions.}
\label{tab:sparse-all}
\end{table}

The results for sparse graphs are given in
Table~\ref{tab:sparse-norebal}, Table~\ref{tab:sparse-rebal},
Table~\ref{tab:sparse-highdegree} and Table~\ref{tab:sparse-all}.  We
first notice (and this observation holds also for the other graph
categories) that the rebalancing lower bound contribution leads to a
huge reduction in number of subproblems; for the sparse graphs often
more than a factor of 20, and both for weighted and unweighted
problems. A similar reduction in running times follows. The
high-degree bound has, as would be expected, no effect here, the
number of subproblems is for all graphs the same. Fortunately, running
times seem to increase only slightly, which could mean that the
larger memory space needed to represent the subproblems for this bound
contribution is not in itself too costly. Adding the large connected
components contribution reduces the number of subproblems that have to
be considered by a significant factor less than 2, especially for the
unweighted graphs (as could be hoped for). Unfortunately, the extra
cost for repeatedly computing connected components outweigh the
reduction in number of subproblems, resulting in an increase in time
to solution by a small factor less than 2. As can be seen in
Table~\ref{tab:sparse-all}, the optimal solution is often found late,
about half-way through, and knowing the optimal solution as expected
leads to a significant reduction in numbers of subproblems that must
be explored; the reduction is less than a factor of 2, though.

\subsubsection{Medium dense graphs}

\begin{table}
\begin{center}
\begin{tabular}{|rrr|r|r|rr|r|}
\hline
$n$ & Prob. & $\max w$ & Time & Cut & Solutions & Subproblems & Opt.\ Time \\
\hline
35 & 0.5 & 1000 & 1.52957 & 58764 & 8 & 1631423 & 1.442220 \\
35 & 0.5 & 1000 & 1.45441 & 53759 & 13 & 1569317 & 1.090717 \\
35 & 0.5 & 1000 & 2.07381 & 57403 & 5 & 2281700 & 0.949366 \\
35 & 0.5 & 1000 & 1.00532 & 52375 & 12 & 1072568 & 0.479008 \\
35 & 0.5 & 1000 & 1.35941 & 51263 & 6 & 1542516 & 0.805244 \\
40 & 0.5 & 1000 & 8.13641 & 77452 & 7 & 7398960 & 7.119544 \\
40 & 0.5 & 1000 & 4.19626 & 65643 & 10 & 3916431 & 2.272946 \\
40 & 0.5 & 1000 & 7.92397 & 74034 & 17 & 7399920 & 6.992345 \\
40 & 0.5 & 1000 & 9.02878 & 69479 & 20 & 8579493 & 6.311949 \\
40 & 0.5 & 1000 & 3.27055 & 64952 & 8 & 2982317 & 2.088845 \\
45 & 0.5 & 1000 & 212.693 & 97409 & 13 & 173104481 & 109.589662 \\
45 & 0.5 & 1000 & 207.349 & 88328 & 15 & 182233817 & 171.094763 \\
45 & 0.5 & 1000 & 289.602 & 99578 & 9 & 244147116 & 254.704806 \\
45 & 0.5 & 1000 & 159.806 & 87290 & 24 & 136813957 & 115.479522 \\
45 & 0.5 & 1000 & 98.8471 & 82358 & 12 & 84298381 & 54.957056 \\
50 & 0.5 & 1000 & 1677.05 & 122708 & 13 & 1244044666 & 1270.327979 \\
50 & 0.5 & 1000 & 757.332 & 113679 & 13 & 568874993 & 31.799608 \\
50 & 0.5 & 1000 & 1654.28 & 119443 & 26 & 1190261679 & 1008.096581 \\
50 & 0.5 & 1000 & 1034.73 & 110783 & 10 & 785507028 & 886.124273 \\
50 & 0.5 & 1000 & 666.77 & 106336 & 14 & 509060437 & 446.841404 \\
\hline
35 & 0.5 & 1 & 4.65021 & 124 & 6 & 5269158 & 3.204704 \\
35 & 0.5 & 1 & 4.21001 & 120 & 7 & 4742440 & 1.624278 \\
35 & 0.5 & 1 & 4.2815 & 122 & 4 & 4667670 & 0.311393 \\
35 & 0.5 & 1 & 3.33108 & 118 & 4 & 3663708 & 1.444637 \\
35 & 0.5 & 1 & 4.15371 & 116 & 7 & 4665842 & 1.641493 \\
40 & 0.5 & 1 & 30.91 & 164 & 4 & 29116782 & 16.022385 \\
40 & 0.5 & 1 & 21.1244 & 152 & 2 & 20240985 & 19.215322 \\
40 & 0.5 & 1 & 25.9468 & 164 & 5 & 24385212 & 14.307037 \\
40 & 0.5 & 1 & 25.9925 & 155 & 5 & 24769899 & 10.099783 \\
40 & 0.5 & 1 & 19.4168 & 148 & 6 & 18709834 & 9.208803 \\
45 & 0.5 & 1 & 848.796 & 208 & 11 & 727360976 & 364.433880 \\
45 & 0.5 & 1 & 567.632 & 194 & 5 & 512035561 & 187.807958 \\
45 & 0.5 & 1 & 840.502 & 212 & 6 & 729022051 & 556.588036 \\
45 & 0.5 & 1 & 475.761 & 196 & 3 & 413970887 & 62.586686 \\
45 & 0.5 & 1 & 539.03 & 189 & 10 & 475797670 & 122.222540 \\
50 & 0.5 & 1 & 6641.34 & 259 & 4 & 826628810 & 2926.981140 \\
50 & 0.5 & 1 & 7660.52 & 253 & 11 & 1771570933 & 5053.982181 \\
50 & 0.5 & 1 & 6319.32 & 263 & 8 & 471230838 & 2461.749443 \\
50 & 0.5 & 1 & 5152.95 & 244 & 9 & -202499879 & 3026.881798 \\
50 & 0.5 & 1 & 3591.88 & 234 & 14 & -1543394689 & 2380.251020 \\
\hline
\end{tabular}
\end{center}
\caption{Medium random graphs with edge probability $0.5$,
  $w\in[1,1000]$ and $w\in[1,1]$. Trivial lower bound.}
\label{tab:medium-norebal}
\end{table}

\begin{table}
\begin{center}
\begin{tabular}{|rrr|r|r|rr|r|}
\hline
$n$ & Prob. & $\max w$ & Time & Cut & Solutions & Subproblems & Opt.\ Time \\
\hline
35 & 0.5 & 1000 & 0.264013 & 58764 & 8 & 240742 & 0.260441 \\
35 & 0.5 & 1000 & 0.248807 & 53759 & 13 & 247377 & 0.203659 \\
35 & 0.5 & 1000 & 0.376443 & 57403 & 5 & 385777 & 0.184042 \\
35 & 0.5 & 1000 & 0.168552 & 52375 & 12 & 157670 & 0.077293 \\
35 & 0.5 & 1000 & 0.273458 & 51263 & 6 & 270768 & 0.173758 \\
40 & 0.5 & 1000 & 0.954789 & 77452 & 7 & 804652 & 0.830155 \\
40 & 0.5 & 1000 & 0.499157 & 65643 & 10 & 407117 & 0.194435 \\
40 & 0.5 & 1000 & 0.863879 & 74034 & 17 & 745162 & 0.729009 \\
40 & 0.5 & 1000 & 1.32082 & 69479 & 20 & 1124210 & 0.848951 \\
40 & 0.5 & 1000 & 0.210091 & 64952 & 8 & 164562 & 0.091707 \\
45 & 0.5 & 1000 & 16.2536 & 97409 & 13 & 12920447 & 9.249073 \\
45 & 0.5 & 1000 & 20.8354 & 88328 & 15 & 17088208 & 18.617669 \\
45 & 0.5 & 1000 & 21.8805 & 99578 & 9 & 17823261 & 20.776830 \\
45 & 0.5 & 1000 & 14.1073 & 87290 & 24 & 11424168 & 11.470726 \\
45 & 0.5 & 1000 & 6.7862 & 82358 & 12 & 5273280 & 4.230401 \\
50 & 0.5 & 1000 & 92.208 & 122708 & 13 & 69114119 & 64.386391 \\
50 & 0.5 & 1000 & 39.5426 & 113679 & 13 & 28419256 & 0.508509 \\
50 & 0.5 & 1000 & 83.9424 & 119443 & 26 & 63410193 & 42.537509 \\
50 & 0.5 & 1000 & 60.9577 & 110783 & 10 & 44579202 & 49.544491 \\
50 & 0.5 & 1000 & 40.4662 & 106336 & 14 & 29260832 & 24.309945 \\
\hline
35 & 0.5 & 1 & 0.84897 & 124 & 6 & 941536 & 0.646175 \\
35 & 0.5 & 1 & 0.691073 & 120 & 7 & 749916 & 0.279613 \\
35 & 0.5 & 1 & 0.737433 & 122 & 4 & 799902 & 0.047950 \\
35 & 0.5 & 1 & 0.473472 & 118 & 4 & 512213 & 0.225050 \\
35 & 0.5 & 1 & 0.72023 & 116 & 7 & 804766 & 0.309161 \\
40 & 0.5 & 1 & 3.13725 & 164 & 4 & 3007647 & 1.411749 \\
40 & 0.5 & 1 & 2.19887 & 152 & 2 & 2049958 & 2.011402 \\
40 & 0.5 & 1 & 2.6435 & 164 & 5 & 2485163 & 1.317400 \\
40 & 0.5 & 1 & 3.08175 & 155 & 5 & 2912912 & 0.900051 \\
40 & 0.5 & 1 & 1.80619 & 148 & 6 & 1689634 & 0.698424 \\
45 & 0.5 & 1 & 68.1693 & 208 & 11 & 59670375 & 31.835521 \\
45 & 0.5 & 1 & 49.584 & 194 & 5 & 42974470 & 17.051052 \\
45 & 0.5 & 1 & 58.1044 & 212 & 6 & 50732290 & 42.331557 \\
45 & 0.5 & 1 & 33.5946 & 196 & 3 & 28382980 & 3.404407 \\
45 & 0.5 & 1 & 44.7034 & 189 & 10 & 38908212 & 9.043913 \\
50 & 0.5 & 1 & 418.594 & 259 & 4 & 344992577 & 158.152543 \\
50 & 0.5 & 1 & 518.243 & 253 & 11 & 431767620 & 317.159786 \\
50 & 0.5 & 1 & 347.785 & 263 & 8 & 284400088 & 120.231741 \\
50 & 0.5 & 1 & 271.259 & 244 & 9 & 215621591 & 150.366613 \\
50 & 0.5 & 1 & 189.26 & 234 & 14 & 148813817 & 122.718955 \\
\hline
\end{tabular}
\end{center}
\caption{Medium random graphs with edge probability $0.5$,
  $w\in[1,1000]$ and $w\in[1,1]$. Lower bound with rebalancing
  contribution.}
\label{tab:medium-rebal}
\end{table}

\begin{table}
\begin{center}
\begin{tabular}{|rrr|r|r|rr|r|}
\hline
$n$ & Prob. & $\max w$ & Time & Cut & Solutions & Subproblems & Opt.\ Time \\
\hline
35 & 0.5 & 1000 & 0.437692 & 58764 & 8 & 235292 & 0.430839 \\
35 & 0.5 & 1000 & 0.425497 & 53759 & 13 & 242591 & 0.347885 \\
35 & 0.5 & 1000 & 0.666342 & 57403 & 5 & 373692 & 0.325182 \\
35 & 0.5 & 1000 & 0.288164 & 52375 & 12 & 154324 & 0.131943 \\
35 & 0.5 & 1000 & 0.47135 & 51263 & 6 & 265518 & 0.301217 \\
40 & 0.5 & 1000 & 1.67566 & 77452 & 7 & 774893 & 1.457920 \\
40 & 0.5 & 1000 & 0.846506 & 65643 & 10 & 399509 & 0.331395 \\
40 & 0.5 & 1000 & 1.5265 & 74034 & 17 & 728341 & 1.290223 \\
40 & 0.5 & 1000 & 2.2194 & 69479 & 20 & 1100933 & 1.449646 \\
40 & 0.5 & 1000 & 0.323746 & 64952 & 8 & 163934 & 0.142388 \\
45 & 0.5 & 1000 & 29.0639 & 97409 & 13 & 12614199 & 16.531646 \\
45 & 0.5 & 1000 & 35.5419 & 88328 & 15 & 16843898 & 31.671609 \\
45 & 0.5 & 1000 & 38.5121 & 99578 & 9 & 17408144 & 36.496099 \\
45 & 0.5 & 1000 & 24.9732 & 87290 & 24 & 11281722 & 20.296342 \\
45 & 0.5 & 1000 & 11.2825 & 82358 & 12 & 5241779 & 7.054381 \\
50 & 0.5 & 1000 & 167.633 & 122708 & 13 & 66634939 & 117.179447 \\
50 & 0.5 & 1000 & 69.967 & 113679 & 13 & 27799777 & 0.975712 \\
50 & 0.5 & 1000 & 159.028 & 119443 & 26 & 60947964 & 80.593920 \\
50 & 0.5 & 1000 & 105.542 & 110783 & 10 & 43918527 & 86.097362 \\
50 & 0.5 & 1000 & 69.2837 & 106336 & 14 & 28457488 & 42.036457 \\
\hline
35 & 0.5 & 1 & 1.35922 & 124 & 6 & 871277 & 1.033276 \\
35 & 0.5 & 1 & 1.07911 & 120 & 7 & 716940 & 0.434534 \\
35 & 0.5 & 1 & 1.17949 & 122 & 4 & 725398 & 0.071981 \\
35 & 0.5 & 1 & 0.752047 & 118 & 4 & 493487 & 0.356672 \\
35 & 0.5 & 1 & 1.16846 & 116 & 7 & 770453 & 0.503175 \\
40 & 0.5 & 1 & 5.07885 & 164 & 4 & 2864374 & 2.286510 \\
40 & 0.5 & 1 & 3.47258 & 152 & 2 & 2018241 & 3.173854 \\
40 & 0.5 & 1 & 4.23372 & 164 & 5 & 2423006 & 2.115559 \\
40 & 0.5 & 1 & 4.81996 & 155 & 5 & 2795728 & 1.407476 \\
40 & 0.5 & 1 & 2.75878 & 148 & 6 & 1656079 & 1.090572 \\
45 & 0.5 & 1 & 109.491 & 208 & 11 & 56006759 & 50.681042 \\
45 & 0.5 & 1 & 78.2504 & 194 & 5 & 41889646 & 26.893112 \\
45 & 0.5 & 1 & 95.5201 & 212 & 6 & 48662675 & 69.489188 \\
45 & 0.5 & 1 & 52.6926 & 196 & 3 & 27582111 & 5.511826 \\
45 & 0.5 & 1 & 68.3469 & 189 & 10 & 37849057 & 13.994667 \\
50 & 0.5 & 1 & 679.514 & 259 & 4 & 319026234 & 254.246279 \\
50 & 0.5 & 1 & 828.198 & 253 & 11 & 415976171 & 507.750523 \\
50 & 0.5 & 1 & 564.802 & 263 & 8 & 256492804 & 192.467646 \\
50 & 0.5 & 1 & 411.546 & 244 & 9 & 211109755 & 227.832552 \\
50 & 0.5 & 1 & 293.403 & 234 & 14 & 142303065 & 189.947916 \\
\hline
\end{tabular}
\end{center}
\caption{Medium random graphs with edge probability $0.5$,
  $w\in[1,1000]$ and $w\in[1,1]$. Lower bound with rebalancing and
  high-degree contributions.}
\label{tab:medium-highdegree}
\end{table}

\begin{table}
\begin{center}
\begin{tabular}{|rrr|r|r|rrr|r|}
\hline
$n$ & Prob. & $\max w$ & Time & Cut & Solutions & Subproblems & With
optimal & Opt.\ Time \\
\hline
35 & 0.5 & 1000 & 0.457685 & 58764 & 8 & 235288 & 168998 & 0.450484 \\
35 & 0.5 & 1000 & 0.456458 & 53759 & 13 & 242548 & 186531 & 0.372767 \\
35 & 0.5 & 1000 & 0.701454 & 57403 & 5 & 373655 & 357593 & 0.347343 \\
35 & 0.5 & 1000 & 0.303973 & 52375 & 12 & 154306 & 107663 & 0.139753 \\
35 & 0.5 & 1000 & 0.490333 & 51263 & 6 & 265476 & 224962 & 0.315305 \\
40 & 0.5 & 1000 & 1.76439 & 77452 & 7 & 774877 & 602759 & 1.540306 \\
40 & 0.5 & 1000 & 0.856906 & 65643 & 10 & 399493 & 381139 & 0.338882 \\
40 & 0.5 & 1000 & 1.52563 & 74034 & 17 & 728331 & 660421 & 1.286866 \\
40 & 0.5 & 1000 & 2.32557 & 69479 & 20 & 1100861 & 750247 & 1.510679 \\
40 & 0.5 & 1000 & 0.366715 & 64952 & 8 & 163902 & 120719 & 0.161070 \\
45 & 0.5 & 1000 & 30.03 & 97409 & 13 & 12614160 & 9331432 & 17.022816 \\
45 & 0.5 & 1000 & 36.9243 & 88328 & 15 & 16843385 & 13513268 & 33.014433 \\
45 & 0.5 & 1000 & 40.0554 & 99578 & 9 & 17408058 & 15693005 & 38.014035 \\
45 & 0.5 & 1000 & 25.6186 & 87290 & 24 & 11281384 & 6465924 & 20.866948 \\
45 & 0.5 & 1000 & 12.2618 & 82358 & 12 & 5240191 & 3495765 & 7.735454 \\
50 & 0.5 & 1000 & 171.224 & 122708 & 13 & 66634689 & 55686051 & 119.794947 \\
50 & 0.5 & 1000 & 72.0477 & 113679 & 13 & 27799722 & 27585136 & 1.010014 \\
50 & 0.5 & 1000 & 160.152 & 119443 & 26 & 60947916 & 43152850 & 80.989391 \\
50 & 0.5 & 1000 & 106.719 & 110783 & 10 & 43917268 & 31788303 & 87.113578 \\
50 & 0.5 & 1000 & 70.4171 & 106336 & 14 & 28456665 & 22696796 & 42.707401 \\
\hline
35 & 0.5 & 1 & 1.4499 & 124 & 6 & 855438 & 601343 & 1.101491 \\
35 & 0.5 & 1 & 1.20145 & 120 & 7 & 696530 & 618260 & 0.486955 \\
35 & 0.5 & 1 & 1.23778 & 122 & 4 & 716422 & 696641 & 0.072427 \\
35 & 0.5 & 1 & 0.854104 & 118 & 4 & 479529 & 393921 & 0.407605 \\
35 & 0.5 & 1 & 1.29394 & 116 & 7 & 747361 & 645569 & 0.551096 \\
40 & 0.5 & 1 & 5.50524 & 164 & 4 & 2818789 & 2344215 & 2.458640 \\
40 & 0.5 & 1 & 4.05059 & 152 & 2 & 1959144 & 1691452 & 3.707572 \\
40 & 0.5 & 1 & 4.71051 & 164 & 5 & 2373464 & 2187064 & 2.329712 \\
40 & 0.5 & 1 & 5.4678 & 155 & 5 & 2720383 & 2432052 & 1.599556 \\
40 & 0.5 & 1 & 3.40263 & 148 & 6 & 1586829 & 1236544 & 1.291652 \\
45 & 0.5 & 1 & 118.224 & 208 & 11 & 55380489 & 46060711 & 55.248718 \\
45 & 0.5 & 1 & 91.6279 & 194 & 5 & 40993549 & 36002769 & 30.710736 \\
45 & 0.5 & 1 & 102.48 & 212 & 6 & 48161062 & 40186440 & 74.656017 \\
45 & 0.5 & 1 & 62.6245 & 196 & 3 & 26867057 & 25884153 & 6.233091 \\
45 & 0.5 & 1 & 87.8081 & 189 & 10 & 36532129 & 33329439 & 17.232088 \\
50 & 0.5 & 1 & 737.273 & 259 & 4 & 315805991 & 291043130 & 274.235916 \\
50 & 0.5 & 1 & 940.956 & 253 & 11 & 408914142 & 309539236 & 575.418819 \\
50 & 0.5 & 1 & 583.4 & 263 & 8 & 255530043 & 214170641 & 198.905050 \\
50 & 0.5 & 1 & 521.97 & 244 & 9 & 204611587 & 171705443 & 285.835977 \\
50 & 0.5 & 1 & 362.011 & 234 & 14 & 138985134 & 98499266 & 229.681082 \\
\hline
\end{tabular}
\end{center}
\caption{Medium random graphs with edge probability $0.5$,
  $w\in[1,1000]$ and $w\in[1,1]$. Lower bound with rebalancing,
  high-degree and large component contributions.}
\label{tab:medium-all}
\end{table}

The results for medium dense graphs are listed in
Table~\ref{tab:medium-norebal}, Table~\ref{tab:medium-rebal},
Table~\ref{tab:medium-highdegree} and Table~\ref{tab:medium-all}.
Again, the benefits from the rebalancing contribution are enormous,
both for weighted and unweighted case, and obviously pay off
proportionally in running time (factors of 15 and more). Here, the
high-degree contribution is triggered and leads to a small reduction
in numbers of subproblems, but the computation is expensive and has a
negative effect on the time to solution which can almost double. The
same holds for the large connected components contribution, which
although the number of subproblems can be reduced slightly, increases
the running times by a small factor less than 2. In most cases the
optimal solution is found relatively late, and there would therefore
be a benefit (in number of subproblems to explore) of having a better
initial solution; the effect is less than a factor of 2, though.

\subsubsection{Dense graphs}

\begin{table}
\begin{center}
\begin{tabular}{|rrr|r|r|rr|r|}
\hline
$n$ & Prob. & $\max w$ & Time & Cut & Solutions & Subproblems & Opt.\ Time \\
\hline
30 & 0.75 & 1000 & 0.370472 & 70916 & 11 & 360252 & 0.259478 \\
30 & 0.75 & 1000 & 0.337704 & 68761 & 2 & 349245 & 0.083185 \\
30 & 0.75 & 1000 & 0.221823 & 67895 & 2 & 214933 & 0.132072 \\
30 & 0.75 & 1000 & 0.308535 & 66801 & 7 & 313060 & 0.253589 \\
30 & 0.75 & 1000 & 0.309682 & 65323 & 11 & 317570 & 0.217466 \\
35 & 0.75 & 1000 & 13.593 & 101149 & 7 & 12514058 & 6.705426 \\
35 & 0.75 & 1000 & 6.52348 & 88464 & 8 & 5895137 & 4.358233 \\
35 & 0.75 & 1000 & 5.20966 & 91901 & 7 & 4572475 & 4.766952 \\
35 & 0.75 & 1000 & 4.94418 & 85501 & 11 & 4295028 & 3.135605 \\
35 & 0.75 & 1000 & 6.67847 & 87457 & 22 & 5989707 & 2.272913 \\
40 & 0.75 & 1000 & 93.225 & 131755 & 8 & 72199908 & 58.842574 \\
40 & 0.75 & 1000 & 65.6515 & 121963 & 10 & 51272133 & 14.000955 \\
40 & 0.75 & 1000 & 70.0444 & 126009 & 14 & 53027510 & 52.134243 \\
40 & 0.75 & 1000 & 53.4483 & 119664 & 9 & 41483027 & 44.666847 \\
40 & 0.75 & 1000 & 49.6401 & 114953 & 16 & 39378102 & 20.038079 \\
\hline
30 & 0.75 & 1 & 0.891637 & 147 & 3 & 979764 & 0.725699 \\
30 & 0.75 & 1 & 1.02601 & 150 & 4 & 1128185 & 0.109506 \\
30 & 0.75 & 1 & 0.836867 & 148 & 2 & 906552 & 0.797915 \\
30 & 0.75 & 1 & 0.938687 & 148 & 1 & 1011748 & 0.000007 \\
30 & 0.75 & 1 & 0.932776 & 145 & 5 & 1043643 & 0.221146 \\
35 & 0.75 & 1 & 32.3945 & 205 & 1 & 31235221 & 0.000009 \\
35 & 0.75 & 1 & 22.7523 & 195 & 4 & 21695745 & 12.457475 \\
35 & 0.75 & 1 & 37.0252 & 206 & 6 & 36168861 & 24.928672 \\
35 & 0.75 & 1 & 28.6312 & 198 & 5 & 27985743 & 11.113570 \\
35 & 0.75 & 1 & 22.8631 & 191 & 6 & 22339313 & 10.276111 \\
40 & 0.75 & 1 & 333.888 & 272 & 6 & 279093966 & 32.442125 \\
40 & 0.75 & 1 & 309.575 & 264 & 9 & 261737608 & 149.204937 \\
40 & 0.75 & 1 & 423.295 & 277 & 5 & 354920538 & 96.527282 \\
40 & 0.75 & 1 & 255.759 & 261 & 10 & 209231563 & 153.626048 \\
40 & 0.75 & 1 & 216.547 & 255 & 6 & 184462859 & 49.831940 \\
\hline
\end{tabular}
\end{center}
\caption{Dense random graphs with edge probability $0.75$,
  $w\in[1,1000]$ and $w\in[1,1]$. Trivial lower bound.}
\label{tab:dense-norebal}
\end{table}

\begin{table}
\begin{center}
\begin{tabular}{|rrr|r|r|rr|r|}
\hline
$n$ & Prob. & $\max w$ & Time & Cut & Solutions & Subproblems & Opt.\ Time \\
\hline
30 & 0.75 & 1000 & 0.103219 & 70916 & 11 & 96349 & 0.072895 \\
30 & 0.75 & 1000 & 0.08406 & 68761 & 2 & 94414 & 0.016650 \\
30 & 0.75 & 1000 & 0.042504 & 67895 & 2 & 47347 & 0.022612 \\
30 & 0.75 & 1000 & 0.06844 & 66801 & 7 & 79773 & 0.058002 \\
30 & 0.75 & 1000 & 0.072394 & 65323 & 11 & 82569 & 0.051860 \\
35 & 0.75 & 1000 & 2.48431 & 101149 & 7 & 2724215 & 1.309536 \\
35 & 0.75 & 1000 & 0.987641 & 88464 & 8 & 1011439 & 0.721924 \\
35 & 0.75 & 1000 & 0.538253 & 91901 & 7 & 534528 & 0.519065 \\
35 & 0.75 & 1000 & 0.58678 & 85501 & 11 & 574069 & 0.428499 \\
35 & 0.75 & 1000 & 1.16092 & 87457 & 22 & 1185220 & 0.420064 \\
40 & 0.75 & 1000 & 12.4906 & 131755 & 8 & 12185394 & 7.514862 \\
40 & 0.75 & 1000 & 6.68102 & 121963 & 10 & 6317423 & 1.201540 \\
40 & 0.75 & 1000 & 7.36096 & 126009 & 14 & 6943774 & 5.513143 \\
40 & 0.75 & 1000 & 4.93249 & 119664 & 9 & 4624870 & 3.989565 \\
40 & 0.75 & 1000 & 5.72612 & 114953 & 16 & 5422791 & 1.996018 \\
\hline
30 & 0.75 & 1 & 0.238 & 147 & 3 & 307003 & 0.191913 \\
30 & 0.75 & 1 & 0.301815 & 150 & 4 & 379298 & 0.034334 \\
30 & 0.75 & 1 & 0.189407 & 148 & 2 & 243934 & 0.181643 \\
30 & 0.75 & 1 & 0.250688 & 148 & 1 & 316154 & 0.000008 \\
30 & 0.75 & 1 & 0.258292 & 145 & 5 & 328164 & 0.063881 \\
35 & 0.75 & 1 & 4.93283 & 205 & 1 & 5974035 & 0.000011 \\
35 & 0.75 & 1 & 3.33951 & 195 & 4 & 3959624 & 1.939788 \\
35 & 0.75 & 1 & 6.65941 & 206 & 6 & 8060167 & 4.803945 \\
35 & 0.75 & 1 & 4.09707 & 198 & 5 & 4963393 & 1.857012 \\
35 & 0.75 & 1 & 3.98347 & 191 & 6 & 4768656 & 1.942667 \\
40 & 0.75 & 1 & 39.3355 & 272 & 6 & 43757388 & 3.069326 \\
40 & 0.75 & 1 & 41.3252 & 264 & 9 & 46667399 & 20.465780 \\
40 & 0.75 & 1 & 64.7553 & 277 & 5 & 72302392 & 14.290673 \\
40 & 0.75 & 1 & 32.1253 & 261 & 10 & 34193672 & 18.884349 \\
40 & 0.75 & 1 & 27.4485 & 255 & 6 & 29658475 & 5.525352 \\
\hline
\end{tabular}
\end{center}
\caption{Dense random graphs with edge probability $0.75$,
  $w\in[1,1000]$ and $w\in[1,1]$. Lower bound with rebalancing contribution.}
\label{tab:dense-rebal}
\end{table}

\begin{table}
\begin{center}
\begin{tabular}{|rrr|r|r|rr|r|}
\hline
$n$ & Prob. & $\max w$ & Time & Cut & Solutions & Subproblems & Opt.\ Time \\
\hline
30 & 0.75 & 1000 & 0.110197 & 70916 & 11 & 59014 & 0.073739 \\
30 & 0.75 & 1000 & 0.110183 & 68761 & 2 & 57776 & 0.021733 \\
30 & 0.75 & 1000 & 0.050028 & 67895 & 2 & 24208 & 0.026681 \\
30 & 0.75 & 1000 & 0.093968 & 66801 & 7 & 49346 & 0.079876 \\
30 & 0.75 & 1000 & 0.100189 & 65323 & 11 & 54630 & 0.070855 \\
35 & 0.75 & 1000 & 3.26961 & 101149 & 7 & 1572739 & 1.722546 \\
35 & 0.75 & 1000 & 1.4405 & 88464 & 8 & 642251 & 1.052328 \\
35 & 0.75 & 1000 & 0.809674 & 91901 & 7 & 338742 & 0.782488 \\
35 & 0.75 & 1000 & 0.845042 & 85501 & 11 & 357992 & 0.619300 \\
35 & 0.75 & 1000 & 1.58746 & 87457 & 22 & 704004 & 0.571071 \\
40 & 0.75 & 1000 & 14.2917 & 131755 & 8 & 5677755 & 8.680114 \\
40 & 0.75 & 1000 & 10.558 & 121963 & 10 & 4217589 & 1.862208 \\
40 & 0.75 & 1000 & 9.04176 & 126009 & 14 & 3509650 & 6.967428 \\
40 & 0.75 & 1000 & 7.21843 & 119664 & 9 & 2803812 & 5.846905 \\
40 & 0.75 & 1000 & 9.29349 & 114953 & 16 & 3709982 & 3.219045 \\
\hline
30 & 0.75 & 1 & 0.157407 & 147 & 3 & 92701 & 0.127816 \\
30 & 0.75 & 1 & 0.192106 & 150 & 4 & 112422 & 0.022864 \\
30 & 0.75 & 1 & 0.106529 & 148 & 2 & 60224 & 0.102453 \\
30 & 0.75 & 1 & 0.146673 & 148 & 1 & 82837 & 0.000009 \\
30 & 0.75 & 1 & 0.2163 & 145 & 5 & 128667 & 0.052001 \\
35 & 0.75 & 1 & 2.55448 & 205 & 1 & 1294630 & 0.000010 \\
35 & 0.75 & 1 & 1.64483 & 195 & 4 & 811987 & 1.005831 \\
35 & 0.75 & 1 & 4.28534 & 206 & 6 & 2258574 & 3.174467 \\
35 & 0.75 & 1 & 3.71848 & 198 & 5 & 1962477 & 1.797938 \\
35 & 0.75 & 1 & 2.18208 & 191 & 6 & 1090292 & 1.109113 \\
40 & 0.75 & 1 & 17.4109 & 272 & 6 & 7898119 & 1.560261 \\
40 & 0.75 & 1 & 26.1942 & 264 & 9 & 12358406 & 13.616799 \\
40 & 0.75 & 1 & 20.491 & 277 & 5 & 9320364 & 5.435163 \\
40 & 0.75 & 1 & 16.4077 & 261 & 10 & 7301820 & 9.935822 \\
40 & 0.75 & 1 & 18.6961 & 255 & 6 & 8532342 & 3.864666 \\
\hline
\end{tabular}
\end{center}
\caption{Dense random graphs with edge probability $0.75$,
  $w\in[1,1000]$ and $w\in[1,1]$. Lower bound with rebalancing and
  high-degree contributions.}
\label{tab:dense-highdegree}
\end{table}

\begin{table}
\begin{center}
\begin{tabular}{|rrr|r|r|rrr|r|}
\hline
$n$ & Prob. & $\max w$ & Time & Cut & Solutions & Subproblems & With
optimal & Opt.\ Time \\
\hline
30 & 0.75 & 1000 & 0.111838 & 70916 & 11 & 59014 & 49411 & 0.074850 \\
30 & 0.75 & 1000 & 0.111387 & 68761 & 2 & 57776 & 55645 & 0.022114 \\
30 & 0.75 & 1000 & 0.050347 & 67895 & 2 & 24208 & 23955 & 0.026551 \\
30 & 0.75 & 1000 & 0.095215 & 66801 & 7 & 49346 & 36983 & 0.080854 \\
30 & 0.75 & 1000 & 0.101543 & 65323 & 11 & 54630 & 36903 & 0.071837 \\
35 & 0.75 & 1000 & 3.28723 & 101149 & 7 & 1572739 & 1425159 & 1.734716 \\
35 & 0.75 & 1000 & 1.44365 & 88464 & 8 & 642251 & 549589 & 1.059350 \\
35 & 0.75 & 1000 & 0.803225 & 91901 & 7 & 338742 & 274701 & 0.776440 \\
35 & 0.75 & 1000 & 0.849026 & 85501 & 11 & 357992 & 294054 & 0.623855 \\
35 & 0.75 & 1000 & 1.58858 & 87457 & 22 & 704004 & 578057 & 0.566048 \\
40 & 0.75 & 1000 & 14.177 & 131755 & 8 & 5677755 & 4348704 & 8.611801 \\
40 & 0.75 & 1000 & 10.4853 & 121963 & 10 & 4217589 & 3966813 & 1.860220 \\
40 & 0.75 & 1000 & 8.99509 & 126009 & 14 & 3509650 & 2261887 & 6.931919 \\
40 & 0.75 & 1000 & 7.1699 & 119664 & 9 & 2803812 & 2452372 & 5.804552 \\
40 & 0.75 & 1000 & 9.26082 & 114953 & 16 & 3709982 & 3174225 & 3.209197 \\
\hline
30 & 0.75 & 1 & 0.157288 & 147 & 3 & 92700 & 72481 & 0.127690 \\
30 & 0.75 & 1 & 0.191835 & 150 & 4 & 112422 & 104662 & 0.022669 \\
30 & 0.75 & 1 & 0.105953 & 148 & 2 & 60224 & 34917 & 0.101923 \\
30 & 0.75 & 1 & 0.146986 & 148 & 1 & 82837 & 82837 & 0.000009 \\
30 & 0.75 & 1 & 0.215565 & 145 & 5 & 128666 & 109322 & 0.051429 \\
35 & 0.75 & 1 & 2.55151 & 205 & 1 & 1294624 & 1294624 & 0.000011 \\
35 & 0.75 & 1 & 1.63822 & 195 & 4 & 811984 & 695869 & 1.002095 \\
35 & 0.75 & 1 & 4.27547 & 206 & 6 & 2258569 & 1735175 & 3.168967 \\
35 & 0.75 & 1 & 3.71503 & 198 & 5 & 1962473 & 1535044 & 1.798483 \\
35 & 0.75 & 1 & 2.17589 & 191 & 6 & 1090289 & 916161 & 1.106713 \\
40 & 0.75 & 1 & 17.4431 & 272 & 6 & 7898116 & 7616903 & 1.562321 \\
40 & 0.75 & 1 & 26.1431 & 264 & 9 & 12358396 & 8928926 & 13.591346 \\
40 & 0.75 & 1 & 20.3156 & 277 & 5 & 9320364 & 8395805 & 5.417039 \\
40 & 0.75 & 1 & 16.2719 & 261 & 10 & 7301815 & 5643512 & 9.864531 \\
40 & 0.75 & 1 & 18.6772 & 255 & 6 & 8532341 & 7852227 & 3.847519 \\
\hline
\end{tabular}
\end{center}
\caption{Dense random graphs with edge probability $0.75$,
  $w\in[1,1000]$ and $w\in[1,1]$. Lower bound with rebalancing,
  high-degree and large connected component contributions.}
\label{tab:dense-all}
\end{table}

The results for the five dense graphs are shown in in
Table~\ref{tab:dense-norebal}, Table~\ref{tab:dense-rebal},
Table~\ref{tab:dense-highdegree} and Table~\ref{tab:dense-all}.  As
for the other graphs, the rebalancing contribution has the largest
effect, and is huge. The high-degree bound now gives a significant
reduction in number of subproblems, especially for the unweighted
problems where the reduction is large enough to lead to a worthwhile
reduction in running time. The large connected component contribution
is rarely triggered here, leads only to a very small change
in number of subproblems, and overall hardly affects the running time.
The reduction in number of subproblems when the optimal solution is
known initially is not as large as for the previous cases.

\subsubsection{Complete  graphs}

\begin{table}
\begin{center}
\begin{tabular}{|rrr|r|r|rr|r|}
\hline
$n$ & Prob. & $\max w$ & Time & Cut & Solutions & Subproblems & Opt.\ Time \\
\hline
20 & 1 & 1000 & 0.006859 & 44780 & 2 & 5911 & 0.000485 \\
20 & 1 & 1000 & 0.005944 & 40637 & 8 & 5308 & 0.001448 \\
20 & 1 & 1000 & 0.006428 & 44723 & 2 & 5745 & 0.005717 \\
20 & 1 & 1000 & 0.004846 & 41657 & 4 & 4204 & 0.001624 \\
20 & 1 & 1000 & 0.005349 & 40891 & 9 & 4967 & 0.004198 \\
30 & 1 & 1000 & 1.20566 & 99972 & 7 & 1074882 & 0.236397 \\
30 & 1 & 1000 & 1.14248 & 91583 & 9 & 1007547 & 1.114057 \\
30 & 1 & 1000 & 1.0778 & 96494 & 10 & 958071 & 0.500738 \\
30 & 1 & 1000 & 1.08817 & 96948 & 4 & 967051 & 0.008571 \\
30 & 1 & 1000 & 0.971279 & 93390 & 12 & 843119 & 0.814758 \\
\hline
20 & 1 & 1 & 0.01453 & 100 & 1 & 24309 & 0.000004 \\
20 & 1 & 1 & 0.014483 & 100 & 1 & 24309 & 0.000003 \\
20 & 1 & 1 & 0.014618 & 100 & 1 & 24309 & 0.000004 \\
20 & 1 & 1 & 0.014633 & 100 & 1 & 24309 & 0.000003 \\
20 & 1 & 1 & 0.014174 & 100 & 1 & 24309 & 0.000003 \\
30 & 1 & 1 & 15.0829 & 225 & 1 & 20058299 & 0.000005 \\
30 & 1 & 1 & 15.3418 & 225 & 1 & 20058299 & 0.000006 \\
30 & 1 & 1 & 15.2103 & 225 & 1 & 20058299 & 0.000006 \\
30 & 1 & 1 & 15.3602 & 225 & 1 & 20058299 & 0.000006 \\
30 & 1 & 1 & 15.4689 & 225 & 1 & 20058299 & 0.000007 \\
\hline
\end{tabular}
\end{center}
\caption{Complete random graphs, $w\in[1,1000]$ and
  $w\in[1,1]$. Trivial lower bound.}
\label{tab:complete-norebal}
\end{table}

\begin{table}
\begin{center}
\begin{tabular}{|rrr|r|r|rr|r|}
\hline
$n$ & Prob. & $\max w$ & Time & Cut & Solutions & Subproblems & Opt.\ Time \\
\hline
20 & 1 & 1000 & 0.005071 & 44780 & 2 & 3754 & 0.000456 \\
20 & 1 & 1000 & 0.003911 & 40637 & 8 & 3032 & 0.001306 \\
20 & 1 & 1000 & 0.004503 & 44723 & 2 & 3580 & 0.004025 \\
20 & 1 & 1000 & 0.002315 & 41657 & 4 & 1985 & 0.000856 \\
20 & 1 & 1000 & 0.003182 & 40891 & 9 & 2663 & 0.002738 \\
30 & 1 & 1000 & 0.335414 & 99972 & 7 & 356804 & 0.074178 \\
30 & 1 & 1000 & 0.324173 & 91583 & 9 & 344801 & 0.317206 \\
30 & 1 & 1000 & 0.259961 & 96494 & 10 & 283930 & 0.135704 \\
30 & 1 & 1000 & 0.306426 & 96948 & 4 & 334555 & 0.004207 \\
30 & 1 & 1000 & 0.241482 & 93390 & 12 & 251313 & 0.207057 \\
\hline
20 & 1 & 1 & 0.01766 & 100 & 1 & 24309 & 0.000005 \\
20 & 1 & 1 & 0.017638 & 100 & 1 & 24309 & 0.000004 \\
20 & 1 & 1 & 0.017909 & 100 & 1 & 24309 & 0.000004 \\
20 & 1 & 1 & 0.017622 & 100 & 1 & 24309 & 0.000004 \\
20 & 1 & 1 & 0.017693 & 100 & 1 & 24309 & 0.000003 \\
30 & 1 & 1 & 17.8267 & 225 & 1 & 20058299 & 0.000008 \\
30 & 1 & 1 & 17.9244 & 225 & 1 & 20058299 & 0.000007 \\
30 & 1 & 1 & 17.8562 & 225 & 1 & 20058299 & 0.000007 \\
30 & 1 & 1 & 18.5786 & 225 & 1 & 20058299 & 0.000008 \\
30 & 1 & 1 & 17.7369 & 225 & 1 & 20058299 & 0.000008 \\
\hline
\end{tabular}
\end{center}
\caption{Complete random graphs, $w\in[1,1000]$ and $w\in[1,1]$. Lower
  bound with rebalancing contribution.}
\label{tab:complete-rebal}
\end{table}

\begin{table}
\begin{center}
\begin{tabular}{|rrr|r|r|rr|r|}
\hline
$n$ & Prob. & $\max w$ & Time & Cut & Solutions & Subproblems & Opt.\ Time \\
\hline
20 & 1 & 1000 & 0.001866 & 44780 & 2 & 1194 & 0.000202 \\
20 & 1 & 1000 & 0.001798 & 40637 & 8 & 1154 & 0.000523 \\
20 & 1 & 1000 & 0.001702 & 44723 & 2 & 1114 & 0.001546 \\
20 & 1 & 1000 & 0.000833 & 41657 & 4 & 529 & 0.000307 \\
20 & 1 & 1000 & 0.001809 & 40891 & 9 & 1250 & 0.001554 \\
30 & 1 & 1000 & 0.157174 & 99972 & 7 & 79146 & 0.029180 \\
30 & 1 & 1000 & 0.166044 & 91583 & 9 & 83938 & 0.162576 \\
30 & 1 & 1000 & 0.117455 & 96494 & 10 & 58498 & 0.064518 \\
30 & 1 & 1000 & 0.124764 & 96948 & 4 & 60501 & 0.001721 \\
30 & 1 & 1000 & 0.114907 & 93390 & 12 & 56732 & 0.098666 \\
\hline
20 & 1 & 1 & 7e-06 & 100 & 1 & 0 & 0.000006 \\
20 & 1 & 1 & 6e-06 & 100 & 1 & 0 & 0.000006 \\
20 & 1 & 1 & 6e-06 & 100 & 1 & 0 & 0.000004 \\
20 & 1 & 1 & 6e-06 & 100 & 1 & 0 & 0.000005 \\
20 & 1 & 1 & 6e-06 & 100 & 1 & 0 & 0.000005 \\
30 & 1 & 1 & 9e-06 & 225 & 1 & 0 & 0.000008 \\
30 & 1 & 1 & 9e-06 & 225 & 1 & 0 & 0.000008 \\
30 & 1 & 1 & 9e-06 & 225 & 1 & 0 & 0.000008 \\
30 & 1 & 1 & 9e-06 & 225 & 1 & 0 & 0.000008 \\
30 & 1 & 1 & 8e-06 & 225 & 1 & 0 & 0.000008 \\
\hline
\end{tabular}
\end{center}
\caption{Complete random graphs, $w\in[1,1000]$ and $w\in[1,1]$. Lower
  bound with rebalancing and high-degree contributions.}
\label{tab:complete-highdegree}
\end{table}

\begin{table}
\begin{center}
\begin{tabular}{|rrr|r|r|rrr|r|}
\hline
$n$ & Prob. & $\max w$ & Time & Cut & Solutions & Subproblems & With
optimal & Opt.\ Time \\
\hline
20 & 1 & 1000 & 0.001868 & 44780 & 2 & 1194 & 1143 & 0.000206 \\
20 & 1 & 1000 & 0.001751 & 40637 & 8 & 1154 & 1118 & 0.000518 \\
20 & 1 & 1000 & 0.001712 & 44723 & 2 & 1114 & 866 & 0.001556 \\
20 & 1 & 1000 & 0.000822 & 41657 & 4 & 529 & 471 & 0.000306 \\
20 & 1 & 1000 & 0.001801 & 40891 & 9 & 1250 & 925 & 0.001548 \\
30 & 1 & 1000 & 0.157706 & 99972 & 7 & 79146 & 76482 & 0.029056 \\
30 & 1 & 1000 & 0.165645 & 91583 & 9 & 83938 & 66955 & 0.162109 \\
30 & 1 & 1000 & 0.116554 & 96494 & 10 & 58498 & 41410 & 0.063725 \\
30 & 1 & 1000 & 0.124496 & 96948 & 4 & 60501 & 60086 & 0.001730 \\
30 & 1 & 1000 & 0.114147 & 93390 & 12 & 56732 & 44219 & 0.098063 \\
\hline
20 & 1 & 1 & 7e-06 & 100 & 1 & 0 & 0 & 0.000006 \\
20 & 1 & 1 & 6e-06 & 100 & 1 & 0 & 0 & 0.000005 \\
20 & 1 & 1 & 6e-06 & 100 & 1 & 0 & 0 & 0.000005 \\
20 & 1 & 1 & 5e-06 & 100 & 1 & 0 & 0 & 0.000004 \\
20 & 1 & 1 & 5e-06 & 100 & 1 & 0 & 0 & 0.000004 \\
30 & 1 & 1 & 1e-05 & 225 & 1 & 0 & 0 & 0.000009 \\
30 & 1 & 1 & 9e-06 & 225 & 1 & 0 & 0 & 0.000008 \\
30 & 1 & 1 & 1e-05 & 225 & 1 & 0 & 0 & 0.000009 \\
30 & 1 & 1 & 9e-06 & 225 & 1 & 0 & 0 & 0.000009 \\
30 & 1 & 1 & 9e-06 & 225 & 1 & 0 & 0 & 0.000008 \\
\hline
\end{tabular}
\end{center}
\caption{Complete random graphs, $w\in[1,1000]$ and $w\in[1,1]$. Lower
  bound with rebalancing, high-degree and large connected component
  contributions.}
\label{tab:complete-all}
\end{table}

Results for the complete graphs can be found in
Table~\ref{tab:complete-norebal}, Table~\ref{tab:complete-rebal},
Table~\ref{tab:complete-highdegree} and Table~\ref{tab:complete-all}.
Here, the rebalancing contribution is much smaller, and only for
weighted graphs; but as can be expected the high-degree contribution
can instead be significant. Indeed, for the unweighted graphs, this
leads to a bound which immediately proves that the initial, heuristic
solution is optimal, and a reduction in number of subproblems from
20058299 to 0. The large connected components contribution is of
course not triggered.

\subsection{Parallel computing aspects}

To illustrate that the Pheet framework can efficiently distribute the
branch-and-bound search over a (large) number of cores, we include
results for the parallel solution of some of the graph problems from
the previous sections using now a prioritized search with some of the
priority data structures implemented in Pheet. For details, see
again~\cite{Wimmer14:diss}, and
also~\cite{Traff13:priosched,Traff13:stratcorr}. The Pheet framework
with the branch-and-bound code and the lower bounds developed in this
report can be downloaded from \url{www.pheet.org}.

The parallel graph partitioning experiments were performed on an
80-core Intel system with 1TB of memory consisting of eight 10-core
Xeon E7-8850 processors. Experiments were run under Debian Linux and
the framework compiled with \texttt{gcc 4.9.1}.

The plots in Table~\ref{fig:sparse-60} to Table~\ref{fig:large-40}
illustrate the speed-ups that can be achieved with increasing number
of cores. The reported running times in seconds are the averages of 30
repeated runs with one graph type.

\begin{figure}
  \includegraphics[width=\textwidth]{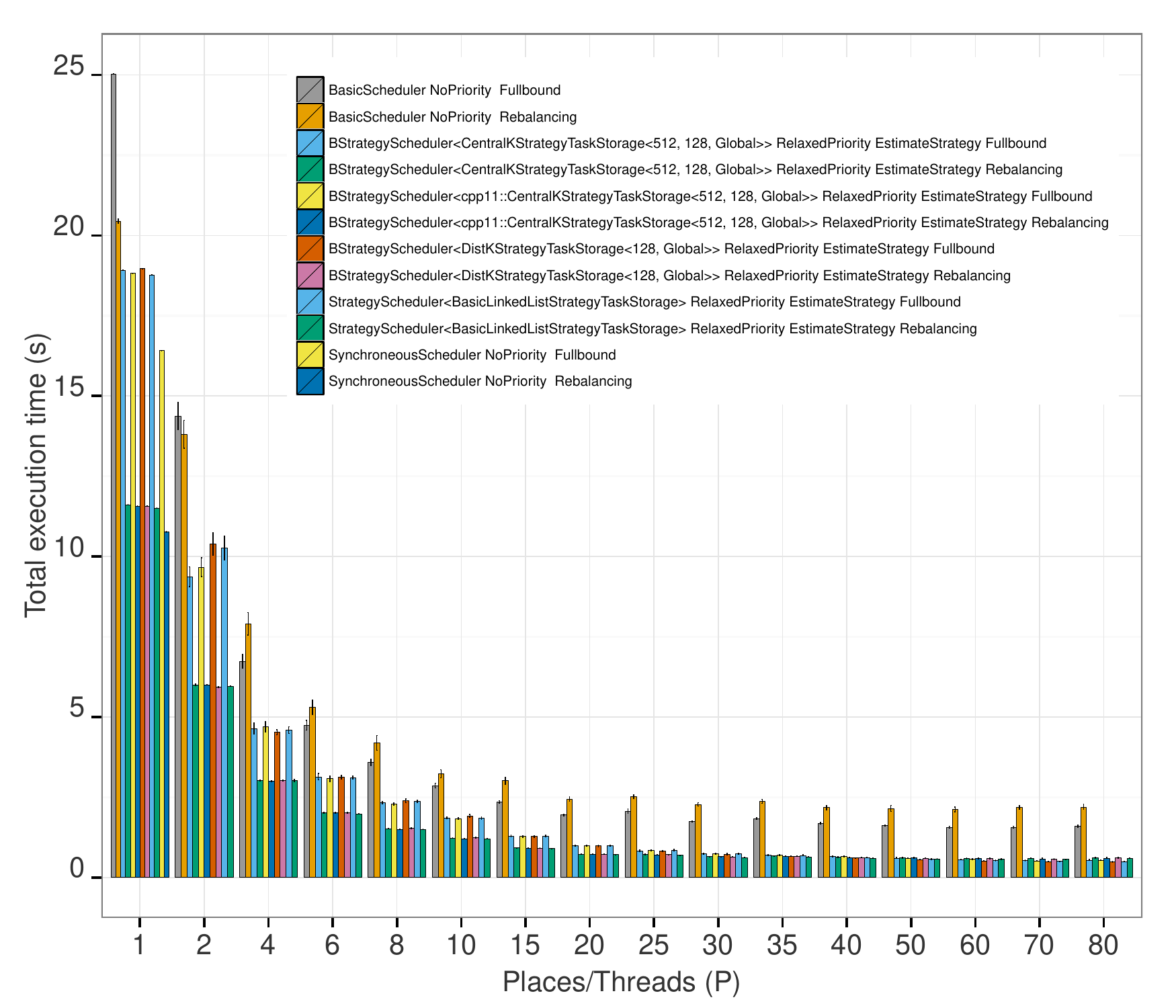}
\caption{Scalability for a sparse graph, $n=60$, edge probability
  $0.1$, $w\in [1,1000]$ with different scheduling strategies with
  1 to 80 cores.}
\label{fig:sparse-60}
\end{figure}

\begin{figure}
  \includegraphics[width=\textwidth]{gcc_mars_40_10_1000_logic.pdf}
\caption{Scalability for a medium dense graph, $n=45$, edge probability
  $0.5$, $w\in [1,1000]$ with different scheduling strategies with
  1 to 80 cores.}
\label{fig:medium-45}
\end{figure}

\begin{figure}
  \includegraphics[width=\textwidth]{gcc_mars_40_10_1000_logic.pdf}
\caption{Scalability for a dense graph, $n=40$, edge probability
  $0.75$, $w\in [1,1000]$ with different scheduling strategies
  with 1 to 80 cores.}
\label{fig:large-40}
\end{figure}

Scheduling strategies make it possible to prioritize tasks
representing graph partitioning subproblems, and select the most
promising task for processing. Most promising can mean either the
globally best task, the locally best task, or the task that is
globally best according to a relaxed correctness
criterion~\cite{Wimmer14:diss}. In the experiments a basic
work-stealing scheduler (legend ``BasicScheduler'' and ``NoPriority'')
not supporting priorities was compared against schedulers supporting
strategies and priority queues with relaxed semantics (legend
``BStrategyScheduler'' and ``RelaxedPriority''). The rebalancing lower
bound (legend ``Rebalancing'') is compared against the full bound with
rebalancing, high-degree and connected-components contributions
(legend ``Fullbound'').

As can be seen in the three concrete cases, running times decreases
with increasing number of cores, up till at least half the machine (40
cores). Prioritizing tasks provide significant reductions in running
time. It is also interesting that the full bound, which in the
sequential setting was often more expensive than the rebalancing bound
becomes cheaper than the rebalancing bound as the number of cores
increase (after four cores).

\section{Concluding remarks}

The purpose of this note was to resurrect and improve an old,
combinatorial lower bound for the weighted graph partitioning problem,
and to use this lower bound together with a modern, parallel
task-scheduling framework for solving weighted graph bipartitioning
problems as fast as possible. The results presented here a
preliminary, and a number of possible improvements were discussed. The
challenge to see whether the bound and the framework is competitive
with current state-of-the art (combinatorial) approaches for the exact
solution of graph partitioning problems (for certain types of graphs)
remains.

\bibliographystyle{abbrv}
\bibliography{traff,parallel}

\begin{thebibliography}{10}

\bibitem{ArmbrusterFugenschuhHelmbergMartin12}
M.~Armbruster, M.~F{\"u}genschuh, C.~Helmberg, and A.~Martin.
\newblock {LP} and {SDP} branch-and-cut algorithms for the minimum graph
  bisection problem: a computational comparison.
\newblock {\em Math. Program. Comput.}, 4(3):275--306, 2012.

\bibitem{BudiuDellingWerneck11}
M.~Budiu, D.~Delling, and R.~F.~F. Werneck.
\newblock Dryadopt: Branch-and-bound on distributed data-parallel execution
  engines.
\newblock In {\em 25th IEEE International Symposium on Parallel and Distributed
  Processing ({IPDPS})}, pages 1278--1289, 2011.

\bibitem{MeyerhenkeSanders13}
A.~Bulu\c{c}, H.~Meyerhenke, I.~Safro, P.~Sanders, and C.~Schulz.
\newblock Recent advances in graph partitioning.
\newblock CoRR abs/1311.3144, 2013.

\bibitem{ChevalierPellegrini08}
C.~Chevalier and F.~Pellegrini.
\newblock {PT}-{Scotch}: A tool for efficient parallel graph ordering.
\newblock {\em {P}arallel {C}omputing}, 34(6-8), 2008.

\bibitem{Traff96:slb}
J.~Clausen, A.~Sterbini, and J.~L. Tr{\"a}ff.
\newblock An easily computable lower bound for the graph partitioning problem.
\newblock Technical Report 96/18, Department of Computer Science, University of
  Copenhagen (DIKU), Copenhagen, Denmark, 1996.

\bibitem{Traff91:or}
J.~Clausen and J.~L. Tr{\"{a}}ff.
\newblock Implementation of parallel branch-and-bound algorithms -- experiences
  with the graph partitioning problem.
\newblock {\em Annals of Operations Research}, 33:331--349, 1991.

\bibitem{Traff94:ppl}
J.~Clausen and J.~L. Tr{\"{a}}ff.
\newblock Do inherently sequential branch-and-bound algorithms exist?
\newblock {\em Parallel Processing Letters}, 4(1 \& 2):3--13, 1994.

\bibitem{CrainicLeCunRoucairol06}
T.~G. Crainic, B.~L. Cun, and C.~Roucairol.
\newblock Parallel branch-and-bound algorithms.
\newblock In E.-G. Talbi, editor, {\em Parallel Combinatorial Optimization},
  pages 1--28. Wiley, 2006.

\bibitem{DellingGoldbergRazenshteynWerneck11}
D.~Delling, A.~V. Goldberg, I.~Razenshteyn, and R.~F.~F. Werneck.
\newblock Graph partitioning with natural cuts.
\newblock In {\em 25th IEEE International Symposium on Parallel and Distributed
  Processing ({IPDPS})}, pages 1135--1146, 2011.

\bibitem{DellingGoldbergRazenshteynWerneck12}
D.~Delling, A.~V. Goldberg, I.~Razenshteyn, and R.~F.~F. Werneck.
\newblock Exact combinatorial branch-and-bound for graph bisection.
\newblock In {\em 14th Meeting on Algorithm Engineering {\&} Experiments
  ({ALENEX})}, pages 30--44. {SIAM}/Omnipress, 2012.

\bibitem{DellingWerneck12}
D.~Delling and R.~F.~F. Werneck.
\newblock Better bounds for graph bisection.
\newblock In {\em 20th Annual European Symposium on Algorithm ({ESA})}, volume
  7501 of {\em {L}ecture {N}otes in {C}omputer {S}cience}, pages 407--418.
  Springer, 2012.

\bibitem{FiducciaMattheyses82}
C.~M. Fiduccia and R.~M. Mattheyses.
\newblock A linear-time heuristic for improving network partitions.
\newblock In {\em 19th {ACM/IEEE} Design Automation Conference ({DAC})}, pages
  175--181, 1982.

\bibitem{GareyJohnson79}
M.~R. Garey and D.~S. Johnson.
\newblock {\em Computers and Intractability: A Guide to the Theory of
  NP-Completeness}.
\newblock Freeman, 1979.
\newblock With an addendum, 1991.

\bibitem{GareyJohnsonStockmeyer76}
M.~R. Garey, D.~S. Johnson, and L.~J. Stockmeyer.
\newblock Some simplified {NP}-complete graph problems.
\newblock {\em {T}heoretical {C}omputer {S}cience}, 1(3):237--267, 1976.

\bibitem{GendronCrainic94}
B.~Gendron and T.~G. Crainic.
\newblock Parallel branch-and-bound algorithms: Survey and synthesis.
\newblock {\em Operations Research}, 42(6):1042--1066, 1994.

\bibitem{Karger00}
D.~R. Karger.
\newblock Minimum cuts in near-linear time.
\newblock {\em Journal of the {ACM}}, 47(1):46--76, 2000.

\bibitem{karypis11}
G.~Karypis.
\newblock {METIS} and {ParMETIS}.
\newblock In D.~A. Padua, editor, {\em Encyclopedia of Parallel Computing},
  pages 1117--1124. Springer, 2011.

\bibitem{KernighanLin70}
B.~W. Kernighan and S.~Lin.
\newblock An efficient heuristic procedure for partitioning graphs.
\newblock {\em Bell System Technical Journal}, 49:291--307, 1970.

\bibitem{PapadimitriouSteiglitz82}
C.~H. Papadimitriou and K.~Steiglitz.
\newblock {\em Combinatorial Optimization: Algorithms and Complexity}.
\newblock Prentice-Hall, 1982.

\bibitem{Pisinger99}
D.~Pisinger.
\newblock Linear time algorithms for knapsack problems with bounded weights.
\newblock {\em {J}ournal of {A}lgorithms}, 33(1):1--14, 1999.

\bibitem{Sensen01}
N.~Sensen.
\newblock Lower bounds and exact algorithms for the graph partitioning problem
  using multicommodity flows.
\newblock In {\em 9th Annual European Symposium on Algorithms {(ESA)}}, pages
  391--403, 2001.

\bibitem{StoerWagner97}
M.~Stoer and F.~Wagner.
\newblock A simple min-cut algorithm.
\newblock {\em Journal of the {ACM}}, 44(4):585--591, 1997.

\bibitem{Talbi06}
E.-G. Talbi, editor.
\newblock {\em Parallel Combinatorial Optimization}.
\newblock John Wiley \& Sons, 2006.

\bibitem{Tarjan72}
R.~E. Tarjan.
\newblock Depth-first search and linear graph algorithms.
\newblock {\em {SIAM} Journal on Computing}, 1(2):146--160, 1972.

\bibitem{Traff06:kpartition}
J.~L. Tr{\"{a}}ff.
\newblock Direct graph $k$-partitioning with a {Kernighan}-{Lin} like
  heuristic.
\newblock {\em Operations Research Letters}, 34(6):621--629, 2006.

\bibitem{Wimmer14:diss}
M.~Wimmer.
\newblock {\em Variations on Task Scheduling for Shared Memory Systems}.
\newblock PhD thesis, Vienna University of Technology (TU Wien), June 2014.

\bibitem{Traff13:stratcorr}
M.~Wimmer, D.~Cederman, J.~L. Tr\"aff, and P.~Tsigas.
\newblock Configurable strategies for work-stealing.
\newblock CoRR abs/1305.6474, 2013.

\bibitem{Traff13:priosched}
M.~Wimmer, D.~Cederman, F.~Versaci, P.~Tsigas, and J.~L. Tr{\"a}ff.
\newblock Data structures for task-based priority scheduling.
\newblock CoRR abs/1312.2501, 2013.

\bibitem{Traff14:priosched}
M.~Wimmer, D.~Cederman, F.~Versaci, P.~Tsigas, and J.~L. Tr{\"a}ff.
\newblock Data structures for task-based priority scheduling.
\newblock In {\em 19th {ACM} Symposium on Principles \& Practice of Parallel
  Programming {(PPoPP)}}, pages 379--380, 2014.

\bibitem{Traff13:strategies}
M.~Wimmer, J.~L. Tr{\"a}ff, D.~Cedermann, and P.~Tsigas.
\newblock Work-stealing with configurable strategies.
\newblock In {\em 18th {ACM} Symposium on Principles \& Practice of Parallel
  Programming {(PPoPP)}}, pages 315--316, 2013.

\end{thebibliography}

\end{document}